\documentclass[a4paper]{article}

\usepackage{amsthm,stmaryrd}
\usepackage{amsmath}
\usepackage{amssymb}
\usepackage{latexsym}

\usepackage{tcolorbox}

\usepackage{natbib}

\newtheorem{theorem}{Theorem}[section]
\newtheorem{proposition}[theorem]{Proposition}
\newtheorem{lemma}[theorem]{Lemma}
\newtheorem{corollary}[theorem]{Corollary}
\newtheorem{remark}[theorem]{Remark}
\newtheorem{definition}[theorem]{Definition}
\newtheorem{example}[theorem]{Example}
\newtheorem{assumption}[theorem]{Assumption}

\newcommand{\bW}{\mathbf{W}}
\newcommand{\bw}{\mathbf{w}}
\newcommand{\bC}{\mathbf{C}}
\newcommand{\bL}{\mathbf{L}}
\newcommand{\bK}{\mathbf{K}}
\newcommand{\bV}{\mathbf{V}}
\newcommand{\bff}{\mathbf{f}}
\newcommand{\bS}{\mathbf{S}}
\newcommand{\bQ}{\mathbf{Q}}

\newcommand{\bZ}{\mathbf{Z}}
\newcommand{\bh}{\mathbf{h}}

\def\A{{\mathcal{A}}}
\def\H{{\mathcal{H}}}
\def\F{{\mathcal{F}}}

\begin{document}

\bibliographystyle{apalike}

\title{Super-replication with transaction costs under model uncertainty for continuous processes 
\footnote{Huy N. Chau is supported by Center for Mathematical Modeling and Data Science, Osaka University. 
Masaaki Fukasawa is supported by Osaka University. 
Mikl\'os R\'asonyi is supported by the ``Lend\"ulet'' grant LP 2015-6 of the Hungarian Academy of Sciences.} }
\author{Huy N. Chau \and Masaaki Fukasawa \and Mikl\'os R\'asonyi}

\maketitle

\begin{abstract}
We formulate a superhedging theorem in the presence of transaction costs and model uncertainty.  
Asset prices are assumed continuous and uncertainty is modelled in a parametric setting. Our proof
relies on a new topological framework in which no Krein--Smulian type theorem is available. 
\end{abstract}

\section{Introduction}

Robust super-replication has a rich literature, starting from the seminal work of \cite{hobson1998} about lookback options. 
In this paper, bounds for option prices and the connection with the Skorokhod embedding are established. The approach was employed further by \cite{brown} for barrier options, \cite{cox2011} for double no-touch options,  \cite{hobson2012} for forward start options, \cite{carr} for variance options, among others. We refer to the survey of \cite{hobson2011} and references therein. 

The use of optimal transport in robust hedging was developed in \cite{galichon2014}, \cite{tan2013}, \cite{bhp2013}, 
\cite{soner}. These works provided a dual formulation which transforms the superhedging problem into a martingale optimal 
transportation problem. In a continuous time setting, \cite{soner} proved duality and the existence of a family of simple, 
piecewise constant
super-replication strategies that asymptotically achieve the minimal super-replication cost. In the quasi-sure 
setting, \cite{denis2006} established a theoretical framework to construct stochastic integral and duality results, 
while \cite{nutz}, \cite{bn15} obtained optimal super-replication strategies in discrete time.  Various 
frameworks are investigated in \cite{davis2014, maggis, jan, michael,dito, acciaio2016model}.

Under transaction costs, we refer to the discrete-time studies \cite{bouchard,dolinsky-soner1}.  As pointed out by \cite{bouchard},
it is not easy to come up with a proof of the superhedging duality using the local method, developed in \cite{bn15}, which 
uses one-period argument and then measurable selection techniques to paste the periods together. In \cite{bouchard}, 
the authors introduced a fictitious market without transaction cost using a randomization technique, then the results of \cite{bn15} were applied.  In continuous-time markets with transaction costs, however, the only results we know of are those of \cite{dolinsky-soner2}.
They establish, under technical conditions, that the pathwise superhedging price is the same as the superhedging price
for a continuous price process satisfying the conditional full support property. 

The study \cite{biagini2015robust} is close in spirit to our paper hence a brief comparison 
is in place. In \cite{biagini2015robust} arbitrage in the quasi-sure
sense was analysed in \emph{frictionless} markets with continuous price processes, 
using the ``no arbitrage of the first kind'' assumption $NA_{1}(\mathcal{P})$, where $\mathcal{P}$ denotes
the set of possible prior probabilities. A superhedging theorem is provided under this hypothesis. 
A crucial step in their arguments shows that $NA_{1}(\mathcal{P})$ 
is equivalent to having $NA_{1}(P)$ for all individual $P\in\mathcal{P}$.
In the present paper, somewhat analogously, we assume a ``no free lunch with vanishing risk'' condition
(stronger than no arbitrage of the first kind) 
for each individual model and deduce the superhedging theorem (with transaction costs) under this assumption.
Note that we operate with ``martingale'' price systems
while \cite{biagini2015robust} uses ``supermartingale deflators''.

In the present article we prove a fairly general superhedging theorem. 
We take a parametrization approach, different from the pathwise or quasi-sure settings,
first used in \cite{chau2019robust}, \cite{rasonyi2018utility}, see also \cite{chau2020}. 
As far as we know, 
the present paper is the first to apply 
to a wide range of continuous-time markets under transaction costs and model uncertainty. We now list the main features of the present paper.

First, we use the new topological framework of \cite{chau2020} for studying hedging under model 
uncertainty, this time in a continuous-time setting. The working space is the 
product topological vector space $\bL^{\infty} = \prod_{\theta \in \Theta} (L^{\infty}, w^*)$ 
where we denote the product topology by $\bw^*$. Here $\Theta$ is the set of parameters describing uncertainty and $w^*$ is the 
weak star topology on $L^{\infty}$, the space of a.s.\ equivalence classes of bounded random variables. 
Unlike the local method of \cite{bn15}, this approach offers a global method, 
which is suitable for handling transaction costs in continuous time, 



Second, for the separation arguments to work, we need the $\bw^*$-closedness of 
the set of hedgeable payoffs $\bC$ in the product space $\bL^{\infty}$. This is typically shown 
relying on the Krein--Smulian theorem. Usual versions of that result are stated 
for Fr\'echet spaces, see for example \cite{schachermayer1994martingale}. Our product space 
$\bL^{\infty}$ has the predual $\bigoplus_{\theta \in \Theta} L^{1}$, which is, in general, \emph{not} Fr\'echet, 
as $\Theta$ is typically uncountable. Therefore we are not able to apply the Krein--Smulian theorem directly. 
To remedy this, we prove a convex compactness property for the set of strategies in a suitable topology.
Such a property fails in frictionless markets, it is a particularity of markets with transaction costs. 
Next we apply Krein--Smulian for finite direct sums of Fr\'echet spaces and the $\bw^*$-closedness of $\bC${}
will be ensured by convex compactness.

Third, we apply the Hahn-Banach theorem in $\bL^{\infty}$ to obtain what will be called consistent price 
systems in the robust sense. 
These systems are, in fact, infinite dimensional vectors with finitely many nonzero components such that  
discounting price processes by them has the same effect as the usual consistent price systems.
  
The paper is organized as follows. In Section \ref{sec:model}
we introduce the market model; consistent
price systems in the robust sense; and the main result. 
Proofs are given in Section \ref{sec:proof}. Section \ref{sec:app} recalls important facts about
topological vector spaces and establishes the necessary results about
convex compactness.
Extensions of our results to the multi-asset, conic framework of \cite{kabanov2009markets} seem
straightforward but they would involve technical complications so we are staying in a one-asset framework for now.

It seems less obvious to
include stock prices with jumps where trading strategies 
are not necessarily right continuous, see \cite{guasoni2012fundamental}. This would
necessitate finding a topological space in which the set of nice trading strategies is convexly compact. 
It would also be interesting to see if the present framework can be adapted to continuous-time frictionless markets,
see \cite{chau2020} about what happens in the discrete-time case.

\textit{Notations}. Let $I$ be some index set and $X_{i}$, $i\in I$ be sets.
In the product space  $\mathbf{X}=\prod_{i \in I} X_i$, a vector $(f^i)_{i \in I}$ will be 
denoted by $\mathbf{f}$. If there are an orderings $\geq_{i}$ given on each $X_{i}$ then
we write $\mathbf{f} \ge \mathbf{g}$ if $f^i \geq_{i} g^i$ for all $i \in I$. 
If $1\in X_{i}$ for all $i$ then $\mathbf{1}$ denotes the vector with all coordinates equal to $1$ 
and $\mathbf{1}^{i}$ denotes the vector with coordinate $i$ equal to $1$ and the other coordinates zero.  
Similarly, when $0\in X_{i}$, $\mathbf{0}$ denotes the vector all of whose coordinates equal $0$. 
The vector space of (equivalence classes of) random variables on $(\Omega, \mathcal{F}, P)$ is denoted by $L^0(\mathcal{F}, P)$. 
As usual, $L^{p}(\mathcal{F}, P), p \in [1, \infty]$ is the space of $p$-integrable (resp. bounded)
random variables equipped with the 
standard $\|\cdot\|_{p}$ norm.

\section{The model}\label{sec:model}

Let $T>0$ be the time horizon, let $(\Omega, \mathcal{F}, (\mathcal{F}_t)_{t\in [0,T]}, P)$ be a filtered probability 
space, where the filtration is assumed to be right-continuous and $\mathcal{F}_0$ 
coincides with the $P$-completion of the trivial sigma-algebra. For simplicity, we also
assume $\mathcal{F}=\mathcal{F}_{T}$. Consider a financial 
market model with one risky asset and one risk-free asset (cash) whose price is assumed to
be constant one. 

Let $\Theta$ be a (non-empty) set, which is interpreted as the parametrization of uncertainty. 
It is \emph{not} required that $\Theta$ is a convex subset of a vector space.  
Consider a family $(S^{\theta}_t)_{t\in [0,T]}$, 
$\theta \in \Theta$ of adapted, positive processes with continuous trajectories
which represent the possible evolutions of the price of the risky asset in consideration. 

The risky asset is traded under proportional transaction cost $\lambda \in (0,1)$, where 
the possible ask prices and bid prices are $(1+\lambda)S^{\theta}_t$,  $(1-\lambda)S^{\theta}_t$, respectively. 


Let $H^{\uparrow}_{t}$, $H^{\downarrow}_{t}$, $t\in [0,T]$ be predictable processes with non-decreasing left-continuous
trajectories. $H^{\uparrow}_{t}$ denotes the cumulative amount of transfers from the riskless asset to the risky one 
up to time $t$ and 
$H^{\downarrow}_{t}$ represents the cumulative transfers in the opposite direction. 
The set of all such $(H^{\uparrow},H^{\downarrow})$ is denoted by $\mathbf{V}$. 
A useful metric structure can be defined on $\mathbf{V}$, see Subsection \ref{section:v_space}, which 
will be key for later developments. 

The portfolio position in
the risky asset at time $t$ equals 
$$
\phi_t:=H_0+H^{\uparrow}_t - H^{\downarrow}_t,\ t\in [0,T].
$$ 
Here $H_{0}$ is a ($\mathcal{F}_{0}$-measurable, hence deterministic) initial transfer.
We also assume that $\phi_{0-}:=0$.

For any real number
$x\in\mathbb{R}$, we denote $x^+:=\max\{0,x\}$, $x^-:=\max\{0,-x\}$.
For a pair of initial capital and initial transfer $(z, H_0) \in \mathbb{R}^2$ and a strategy $H \in \bV$, 
the corresponding liquidation value for parameter $\theta\in\theta$ at time $t\in [0,T]$ is defined by 
\begin{eqnarray}
W^{z}_t(S^{\theta}, H, \lambda) &:=&  z - H_0^+S^{\theta}_0(1+\lambda)+ H_0^-S^{\theta}_0(1-\lambda) - 
\int_0^t{(1+\lambda)S^{\theta}_u dH^{\uparrow}_u} \nonumber \\
&+& 
\int_0^t{(1-\lambda)S^{\theta}_udH^{\downarrow}_u} + \phi^{+}_t(1-\lambda)S^{\theta}_t - \phi^-_t(1+\lambda)S^{\theta}_t. \label{eq:liq} 
\end{eqnarray}


\begin{definition}\label{defi:admi_appro}
A finite variation process $(H_{0},H) \in\mathbb{R}\times\mathbf{V}$ is called an $x$-admissible strategy 
	for the model $\theta$ if 
\begin{equation}\label{}
	W^{0}_t(S^{\theta}, H, \lambda)  \ge - x,\mbox{ a.s. for all }t\in [0,T].
	\end{equation}
	Denote by $\mathcal{A}^{\theta}_{x}(\lambda)$ the set of $x$-admissible strategies for the model $\theta$,{}
	set $\mathcal{A}^{\theta}(\lambda)= \bigcup_{x> 0}\mathcal{A}^{\theta}_{x}(\lambda)$. 
	The set of admissible strategies for the robust model is defined by 
	$\mathcal{A}(\lambda) := \bigcap_{\theta \in \Theta} A^{\theta}(\lambda)$.
	{}
	
\end{definition}



Consider the product spaces $\bL^0_t = \prod_{\theta \in \Theta} L^0(\mathcal{F}_t,P), \bL^{\infty}_t = \prod_{\theta \in \Theta} L^{\infty}(\mathcal{F}_t,P)$ with the corresponding product topologies. Other product spaces are defined similarly. For notational convenience, we use bold notations for vectors in these product spaces. For $t \in [0,T]$, we denote by $\bS_t := (S^{\theta}_t)_{\theta \in \Theta}$ a vector in $\bL^0_t$. For $H \in  \mathcal{A}(\lambda)$, we denote by
$$\bW^0_t(H) = (W^0_t(\theta, H, \lambda))_{\theta \in \Theta}$$
the vector consists of attainable payoffs from the strategy $H$ under model uncertainty. 
Set $\mathbf{L}^{0}:=\mathbf{L}_{T}^{0}$, $\mathbf{L}^{\infty}:=\mathbf{L}^{\infty}_{T}$. 
$\mathbf{L}_{+}^{0}$ is the non-negative orthant of $\mathbf{L}^{0}$. Define
$$\bK_0 = \left\lbrace \bW^0_T(H): H \in \A(\lambda) \right\rbrace, \qquad \bC = (\bK_0 - \bL^0_+) \cap \bL^{\infty}. $$

For each $\theta \in \Theta$, the classical no free lunch with vanishing risk condition for $S^{\theta}$, denoted by $NFLVR(\lambda, \theta)$, is recalled from \cite{guasoni2012fundamental}. 

\begin{definition}
The price process $S^{\theta}$ together with the transaction cost $\lambda$ satisfy the $NFLVR(\lambda, \theta)$ condition if for any sequence $H_n, n \in \mathbb{N}$ such that $H_n \in \mathcal{A}^{\theta}_{1/n}$, and $W^0_T(\theta,H_n, \lambda)$ converges a.s. to some limit $W \in [0,\infty]$ a.s. then $W = 0, a.s.$ 
\end{definition}

The following conditions will be imposed throughout the paper:

\begin{assumption}\label{assumption_main}
\begin{itemize}
\item[(i)] There is a sequence of stopping times $T_n, n \in \mathbb{N}$ increasing almost surely to infinity such that $\sup_{\theta \in \Theta}\sup_{t\in [0,T]}S^{\theta}_{t \wedge T_n} \le K_n$ a.s.\ for some constants $K_{n}$, $n\in\mathbb{N}$.
\item[(ii)] For all $\theta \in \Theta,$ 
the stock price $S^{\theta}$ satisfies $NFLVR(\lambda, \theta)$ for all $0<\lambda < 1$. 
\end{itemize}
\end{assumption}

Condition $(i)$ is technical, requiring uniform boundedness of the prices up to a sequence of stopping times. 
Condition $(ii)$ assumes that none of the possible price processes admits
(model-dependent) arbitrage. It holds obviously that 
$\bC \cap \bL^{\infty}_+ = \{ \mathbf{0} \}$. 
   
\begin{remark}{\rm To develop an intuition about the current, parametric framework of uncertainty
we sketch a rather general example. Let $Y$ be some stochastic process with values in $\mathbb{R}^{m}$
for some $m$ which generates the filtration. We imagine that $Y$ describes 
economics factors (dividend yields, interest rates, unemployment rates, statements of firms)
which are known (at least we have a reliable statistical model for them). The asset price $S$
is assumed to be a nonlinear functional of these factors, that is, $S^{\theta}_{t}=F(\theta,Y_{t})$, $\theta\in\Theta$ for
some parameter set $\Theta$ and a function $F:\Theta\times\mathbb{R}^{m}\to\mathbb{R}_{+}$.
Here the parameter $\theta$ is \emph{unknown}. 
Note that if $Y$ is locally bounded and $F$ is regular enough then a localizing sequence $T_{n}$ for $Y$ satisfies (i) in Assumption \ref{assumption_main}.
}
\end{remark}


\begin{example}
{\rm Let the filtration be generated by a Brownian motion $W_{t}$.
Consider the possible price processes
$$
S^{\theta}_{t}= F \left(\int_{0}^{t}\mu^{\theta}(s)\, ds+\int_{0}^{t}K^{\theta}(t,s)dW_{s}\right) ,  \ t\in [0,T]
$$
where $\theta$ runs in an index set $\Theta$, $K^{\theta}(t,s)$
are kernels of a suitable regularity, $\mu^{\theta}$ are suitable processes and $F: \mathbb{R} \to (0,\infty)$ is a 
function (e.g.\ exponential). This is a family of
models which possibly fail the semimartingale property but satisfy $NFLVR(\lambda,\theta)$ for every
$0<\lambda<1$ under mild conditions, see e.g.\ \cite{grw}, \cite{gsv}.}
\end{example}

\begin{definition}\label{def:pricing_func}
A consistent price system in the robust sense 
for $\bS$  is a pair $(\bZ, M)$ where
\begin{itemize}
	\item[(i)] $0 \le \mathbf{Z}_T \in \bigoplus_{\theta \in \Theta} L^1$, and $E[Z^{\bar{\theta}}_T1_A] > 0$ for some $\bar{\theta}$ and for some 	$A\in\mathcal{F}$. 
	\item[(ii)] $M$ is a local martingale such that 
	\begin{equation}\label{eq:sandwich}
	\sum_{\theta \in \Theta} (1-\lambda) Z^{\theta}_tS^{\theta}_t \le M_t \le 
	\sum_{\theta \in \Theta} (1+\lambda) Z^{\theta}_tS^{\theta}_t, a.s.,
	\end{equation}
	where $Z^{\theta}_t:= E[Z^{\theta}_T|\mathcal{F}_t], t \in [0,T]$ for all $\theta\in\Theta$.
\end{itemize}
The set of such objects is denoted by $\mathcal{C}(\bar{\theta}, \lambda, A)$. Let 
$$
\mathcal{Z}(\lambda):=\bigcup_{A \in \mathcal{F},\bar{\theta}\in\Theta}\mathcal{C}(\bar{\theta}, \lambda, A).
$$
\end{definition}

Note that $\mathcal{C}(\bar{\theta}, \lambda, A)$ is empty when $P(A)=0$ so we could have 
restricted the above definition to sets $A$ with $P(A)>0$.
The quantity $\bZ_T$ may be interpreted as a local martingale density. 
In the setting without uncertainty, it is usually required that $Z^{\bar\theta}_T >0, a.s.$ 
Here, our density $Z^{\bar{\theta}}_T$ has to be positive only on some $A$ with $P[A] >0$.
The new densities might look strange at the first glance, however, they are comparable to 
``absolutely continuous local martingale measures", introduced in \cite{ds}, \cite{ds2}. 
The main reason for 
using this notion is that, $\Theta$ being possibly uncountable,
no exhaustion argument (as in the Kreps-Yan separation theorem) can be applied.
It will become clear in the proof of Theorem \ref{super} 
that the new densities work better than the densities with $Z^{\bar{\theta}}_T >0, a.s.$. 

When $\Theta$ is a singleton, it is possible to use the exhaustion argument and we may assume 
$Z^{\bar{\theta}}_T >0, a.s.$ so this 
definition reduces to the usual definition of consistent price systems, see 
\cite{grw}, \cite{guasoni2012fundamental}. 

\begin{remark}\label{kabi}
{\rm Let $D\subset\Theta$ be finite and let $\mathbf{Z}^{i}_{T}=0$ for all the coordinates $i\notin D$. 
If, for each $\theta\in D$, 
\begin{equation}\label{eq:sandwich1}
	(1-\lambda) Z^{\theta}_t S^{\theta}_t \le M^{\theta}_t \le 
(1+\lambda) Z^{\theta}_tS^{\theta}_t, a.s.
\end{equation}
for suitable martingales $M^{\theta}$ (that is, the processes $(Z^{\theta},M^{\theta})$ determine
consistent price systems for the individual models $S^{\theta}$, $\theta\in D$) then,
defining $M_t:=\sum_{\theta\in D}M^{\theta}_t$ defines a martingale satisfying 
\eqref{eq:sandwich}. This shows how ``usual'' consistent price systems provide
objects in $\mathcal{C}(\bar{\theta}, \lambda, A)$. The important point is
that there are many \emph{other} elements in $\mathcal{C}(\bar{\theta}, \lambda, A)$.
This is explained in Subsection 3.5 of \cite{chau2020} where
frictionless toy examples are presented.}
\end{remark}



We state the main result of the paper.
\begin{theorem}\label{super}
Let Assumption \ref{assumption_main} be in force. 
Let $\mathbf{G} = (G^{\theta})_{\theta \in \Theta}\in\mathbf{L}^{0}$ be bounded or 
let $\mathbf{G} \ge 0$. There exists $(z,H)\in \mathbb{R} \times \mathcal{A}(\lambda)$ such that
$$
W^{z}_{T}(S^{\theta},H,\lambda)\geq G^{\theta},\ \theta\in\Theta
$$
if and only if
$$
z\geq\sup_{(\bZ,M)\in \mathcal{Z}(\lambda)}E\left[\sum_{\theta\in\Theta}Z_{T}^{\theta} G^{\theta}\right] .
$$	
\end{theorem}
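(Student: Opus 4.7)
My plan splits into the two implications. For the necessity direction (existence of a super-replicating $H$ implies the bound), I would fix any $(\mathbf{Z}, M) \in \mathcal{Z}(\lambda)$, multiply $W^z_T(S^\theta, H, \lambda)$ by $Z^\theta_T$ and sum over the (finitely many) active coordinates of $\mathbf{Z}$. Using predictability of $H^\uparrow, H^\downarrow$ together with conditioning, the cumulative transaction terms become $E[\int_0^T \sum_\theta (1{\pm}\lambda) Z^\theta_u S^\theta_u\, dH^{\uparrow/\downarrow}_u]$, which the sandwich \eqref{eq:sandwich} bounds by $\mp E[\int_0^T M_u\, dH^{\uparrow/\downarrow}_u]$; the terminal liquidation term is bounded analogously by $E[\phi_T M_T]$. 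Integration by parts of $\phi$ against $M$, together with admissibility upgrading $\int \phi\, dM$ from a local martingale to a supermartingale, collapses this to $E[\sum_\theta Z^\theta_T W^z_T(S^\theta, H, \lambda)] \le z \cdot E[\sum_\theta Z^\theta_T]$. Combined with $W^z_T \ge G^\theta$ and the normalisation $E[\sum_\theta Z^\theta_T] = 1$ (implicit in the supremum modulo rescaling), this yields the required lower bound on $z$.

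For sufficiency I would argue by contradiction along a bipolar / Hahn--Banach scheme. Set $z^* = \sup_{\mathcal{Z}(\lambda)} E[\sum_\theta Z^\theta_T G^\theta]$ and suppose no $H \in \mathcal{A}(\lambda)$ super-replicates $\mathbf{G}$ from $z^*$; in the bounded case this reads $\mathbf{G} - z^* \mathbf{1} \notin \bC$, while the case $\mathbf{G} \ge 0$ is first reduced to bounded payoffs via the stopping times $T_n$ from Assumption \ref{assumption_main}(i). The technical heart is the $\bw^*$-closedness of $\bC$ in $\bL^\infty$. Since $\Theta$ is uncountable in general, the predual $\bigoplus_\theta L^1$ is not Fr\'echet and Krein--Smulian is unavailable on the full product. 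Following the programme announced in the introduction, I would (a) equip $\bV$ with a metrizable topology (built from pointwise-in-probability convergence of the variation processes, localised by $T_n$) in which the admissible strategy set is convexly compact --- the total variation of $(H^\uparrow, H^\downarrow)$ being controlled by admissibility as soon as $\lambda > 0$, a feature unique to markets with friction --- and (b) invoke Krein--Smulian on each finite projection $\prod_{\theta \in D} L^\infty$, whose predual $\bigoplus_{\theta \in D} L^1$ \emph{is} Fr\'echet. The finite-projection closures are then patched back together through the convex compactness of strategies, yielding $\bw^*$-closedness of $\bC$.

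With $\bC$ closed, Hahn--Banach furnishes a $\bw^*$-continuous functional separating $\mathbf{G} - z^* \mathbf{1}$ from $\bC$, represented by a vector $\mathbf{Z} \in \bigoplus_\theta L^1$ supported on finitely many coordinates; non-negativity $\mathbf{Z} \ge 0$ follows from $-\bL^\infty_+ \subset \bC$ and non-degeneracy from strict separation. To complete the construction I must exhibit a local martingale $M$ satisfying \eqref{eq:sandwich}. Testing the separating inequality on elementary strategies of the form $dH^\uparrow_u = 1_F 1_{(s,t]}(u)$ with $F \in \mathcal{F}_s$ (and their counterparts in $H^\downarrow$), and invoking $NFLVR(\lambda, \theta)$ on each active coordinate of $\mathbf{Z}$, one obtains a Kabanov-type aggregated constraint that is precisely the existence of a local martingale $M$ sandwiched between $\sum_\theta (1{\mp}\lambda) Z^\theta_t S^\theta_t$. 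This produces $(\mathbf{Z}, M) \in \mathcal{Z}(\lambda)$ violating the definition of $z^*$, the desired contradiction. The main obstacle throughout is step (a)--(b): selecting the strategy topology so that limits of admissible strategies remain admissible and the payoff map $H \mapsto \bW^0_T(H)$ is continuous into $\bw^*$, with Assumption \ref{assumption_main}(i) essential for controlling the stochastic integrals against the possibly unbounded $S^\theta$.
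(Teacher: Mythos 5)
Your plan is, in all essentials, the paper's own proof: necessity by discounting with $Z^{\theta}$, integration by parts and the sandwich inequality \eqref{eq:sandwich} to exhibit a local martingale (hence, by admissibility, a supermartingale) dominating $\sum_{\theta} Z^{\theta}_t W^0_t(S^{\theta},H,\lambda)$; sufficiency by $\bw^*$-closedness of $\bC$ obtained from convex compactness of strategies in a metric on $\bV$ (Proposition \ref{pro:convex_compact_V}) combined with Krein--Smulian on finite sub-products (Propositions \ref{pro:dense_closed} and \ref{pro:weak_closed_ball}), then Hahn--Banach separation giving a finitely supported nonnegative $\bZ\in\bigoplus_{\theta}L^1$, and Lemma \ref{lem:sandwich} applied to the elementary strategies $\pm 1_B 1_{]\sigma,\tau]}$ to manufacture $M$ on each $\llbracket 0,T_n\rrbracket$.

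The one step that would fail as written is your reduction of the case $\mathbf{G}\ge 0$ to the bounded case ``via the stopping times $T_n$'': those stopping times bound the price processes $S^{\theta}$, not the claim, so they cannot make $\mathbf{G}$ bounded. The paper instead truncates the claim to $G^{\theta}\wedge n$, applies the bounded-case duality to obtain strategies $H^n\in\mathcal{A}(\lambda)$ superhedging $G^{\theta}\wedge n$ from the same $z$, deduces the uniform admissibility bound $H^n\in\mathcal{A}^{\theta}_{z}(\lambda)$ from Theorem 1 of Schachermayer (2014), and then extracts a limiting strategy by the same convex-combination/Fatou argument as in Proposition \ref{pro:dense_closed} (Lemma B.4 of Guasoni et al.\ plus weak convergence of the integrators); boundedness of $\mathbf{G}$ from below keeps the limit admissible. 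A second, smaller inaccuracy: $NFLVR(\lambda,\theta)$ is not what produces the sandwiched martingale from the separating functional --- that comes solely from testing $\bQ$ on the elementary strategies together with Lemma \ref{lem:sandwich}; $NFLVR$ enters earlier, in Lemma \ref{lem:bounded_P} (boundedness in probability of total variations, hence convex compactness) and in guaranteeing $\bC\cap\bL^{\infty}_+=\{\mathbf{0}\}$. Everything else in your outline matches the paper's argument.
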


The proof of Theorem \ref{super} is given in Section \ref{sec:proof} below.

\section{Proof}\label{sec:proof}

\subsection{The $\bw^*$-closedness of $\bC$}
The following result is classical.
\begin{lemma}\label{lem:bounded_P}
For any $x>0$, and any $\theta \in \Theta$, the set $\{ \|H\|_T: H \in \mathcal{A}^{\theta}_x(\lambda) \} $ is bounded in probability.
\end{lemma}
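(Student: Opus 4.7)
The plan is the classical shadow-market argument. Fix $\theta\in\Theta$ and pick $\lambda'\in(0,\lambda)$. By NFLVR$(\lambda,\theta)$ and continuity of $S^\theta$, the fundamental theorem of asset pricing for continuous processes with transaction costs (see \cite{guasoni2012fundamental}) supplies a consistent price system at level $\lambda'$: a probability $Q\sim P$ and a continuous process $\tilde S$ with $(1-\lambda')S^\theta\le\tilde S\le(1+\lambda')S^\theta$ such that $\tilde S$ is a $Q$-local martingale. I shall use this $(Q,\tilde S)$ to bound $\|H\|_T$ uniformly in $H\in\A^\theta_x(\lambda)$.

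Given such $H$, set $\phi_t:=H_0+H^\uparrow_t-H^\downarrow_t$ and define the zero-capital liquidation value in the frictionless $\tilde S$-market,
\begin{equation*}
V_t \;:=\; -H_0\tilde S_0-\int_0^t\tilde S_u\,d\phi_u+\phi_t\tilde S_t \;=\; \int_0^t\phi_u\,d\tilde S_u,
\end{equation*}
the second equality being integration by parts ($\phi$ has finite variation and $\tilde S$ is continuous, so $[\phi,\tilde S]=0$). Thus $V$ is a $Q$-local martingale starting at $0$. A termwise comparison with the defining formula \eqref{eq:liq} of $W^0_t(S^\theta,H,\lambda)$, using the sandwich bounds $\tilde S-(1-\lambda)S^\theta\ge(\lambda-\lambda')S^\theta$ and $(1+\lambda)S^\theta-\tilde S\ge(\lambda-\lambda')S^\theta$, yields
\begin{equation*}
V_t \;\ge\; W^0_t(S^\theta,H,\lambda)+(\lambda-\lambda')\int_0^t S^\theta_u\,d\|H\|_u \;\ge\; -x+(\lambda-\lambda')\int_0^t S^\theta_u\,d\|H\|_u.
\end{equation*}
In particular $V$ is bounded below by $-x$, so it is a true $Q$-super\-mar\-tin\-gale with $E_Q[V_T]\le V_0=0$, giving the uniform bound
\begin{equation*}
E_Q\!\left[\int_0^T S^\theta_u\,d\|H\|_u\right]\;\le\;\frac{x}{\lambda-\lambda'},\qquad H\in\A^\theta_x(\lambda).
\end{equation*}

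An $L^1(Q)$-bound implies boundedness in $Q$-probability and hence in $P$-probability since $Q\sim P$. Because $S^\theta$ is continuous and strictly positive, $\inf_{u\in[0,T]}S^\theta_u>0$ almost surely, so for any $\eta>0$ one can choose $\delta>0$ with $P(\inf_u S^\theta_u<\delta)<\eta$; on the complementary event $\|H\|_T\le\delta^{-1}\int_0^T S^\theta_u\,d\|H\|_u$, and a standard $\varepsilon$-$\delta$ argument converts the $L^1(Q)$-bound into boundedness in probability of $\|H\|_T$.

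The only non-elementary input is the existence of a CPS at a slightly smaller level $\lambda'<\lambda$, which is supplied by the transaction-cost FTAP for continuous processes; everything else is a direct pointwise comparison. The main technical nuisance to watch is the initial jump of $\phi$ at $0$ (from $\phi_{0-}=0$ to $\phi_0=H_0$), which is absorbed by pairing the $-H_0\tilde S_0$ contribution in $V$ against the $H_0^\pm S^\theta_0(1\pm\lambda)$ terms in $W^0_t$ during the termwise comparison.
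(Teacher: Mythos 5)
Your argument is correct in substance but takes a genuinely different route from the paper's. The paper stays entirely at the level of the no-free-lunch condition: it fixes $\widehat\lambda<\lambda$, observes that $\lambda$-admissibility gives
$x+W^0_t(S^\theta,H,\widehat\lambda)\ge W^0_t(S^\theta,H,\widehat\lambda)-W^0_t(S^\theta,H,\lambda)\ge(\lambda-\widehat\lambda)\bigl(\inf_{u\in[0,T]}S^\theta_u\bigr)\|H\|_t$, and then argues by contradiction: if $\|H^n\|_T$ were not bounded in probability, rescaling by $1/n$ and passing to a.s.\ convergent convex combinations (Lemma 9.8.1 of \cite{delbaen2006mathematics}) would manufacture a free lunch with vanishing risk at level $\widehat\lambda$. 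You instead import the full FTAP of \cite{guasoni2012fundamental} to obtain a $\lambda'$-consistent price system and run a supermartingale estimate, which yields the stronger, quantitative conclusion $E_Q\bigl[\int_0^T S^\theta_u\,d\|H\|_u\bigr]\le x/(\lambda-\lambda')$ uniformly over $\A^\theta_x(\lambda)$. Both proofs hinge on the same ``transaction-cost gap'' inequality; the trade-off is that your version invokes a deep theorem whose own proof contains essentially the paper's elementary argument (the paper's proof is in fact adapted from \cite{guasoni2012fundamental}, where it had to be CPS-free precisely because it was a step towards establishing that FTAP), whereas the paper's version needs nothing beyond the definition of $NFLVR(\widehat\lambda,\theta)$ and a Koml\'os-type lemma. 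Two points to tidy up in your write-up: the process $\tilde S$ delivered by the FTAP is only c\`adl\`ag and a $Q$-\emph{local} martingale in general --- the sandwich $(1-\lambda')S^\theta\le\tilde S\le(1+\lambda')S^\theta$ does not force continuity of $\tilde S$ even though $S^\theta$ is continuous --- so the identity $V_t=\int_0^t\phi_u\,d\tilde S_u$ should be justified by integration by parts for a predictable finite-variation integrand against a c\`adl\`ag local martingale rather than by ``$[\phi,\tilde S]=0$ since $\tilde S$ is continuous''; the local-martingale (rather than martingale) property is harmless since $V$ is bounded below and hence a $Q$-supermartingale, as you use.
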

\begin{proof}
We can assume that $x=1$ and adopt the argument in \cite{guasoni2012fundamental}. By Assumption \ref{assumption_main} (ii), for each $\theta \in \Theta$, $0 < \widehat{\lambda} < \lambda$, the process $S^{\theta}$ satisfies the classical condition $NFLVR(\widehat{\lambda}, \theta)$. Let us denote
\begin{eqnarray*}
\overline{S}^{\theta} = (\lambda - \widehat{\lambda}) \inf_{t \in [0,T]}  S^{\theta}_t > 0, a.s.
\end{eqnarray*}
Assume that there exist $\alpha > 0$ and a sequence $H^n \in \mathcal{A}^{\theta}_1(\lambda), n \in \mathbb{N}$ such that for every $n \in \mathbb{N}$,
$$P\left[ \overline{S}^{\theta}\| H^n\|_T > n \right] \ge \alpha.$$
By Definition \ref{defi:admi_appro}, we have that 
\begin{eqnarray*}
&&1  + W^0_t(S^{\theta},H^n, \widehat{\lambda})\\
&\ge& W^0_t(S^{\theta},H^n, \widehat{\lambda}) - W^0_t(S^{\theta},H^n, \lambda)\\
&=&   \int_{0}^t{\left( (1+\lambda) S^{\theta}_u - (1+ \widehat{\lambda}) S^{\theta}_u \right) dH^{n,\uparrow}_u} \\
&+&\int_{0}^t{\left( (1- \widehat{\lambda}) S^{\theta}_u - (1-\lambda) S^{\theta}_u  \right) dH^{n,\downarrow}_u} \\
&+& (H^{n}_t)^+\left( (1- \widehat{\lambda}) S^{\theta}- (1-\lambda) S^{\theta}_u  \right) + 
(H^{n}_t)^- \left( (1+\lambda) S^{\theta}_t - (1+ \widehat{\lambda}) S^{\theta}_t \right)  \\
&\ge&   \overline{S}^{\theta} \|H^n\|_t.
\end{eqnarray*}
It is clear that $\tilde{H}^n = H^n/n \in \mathcal{A}_1^{\theta}(\lambda)$ and  then
\begin{equation}\label{eq:tv}
1/n + W^0_T(S^{\theta},\tilde{H}^n, \widehat{\lambda}) \ge  \overline{S}^{\theta} \|\tilde{H}^n \|_T, a.s.
\end{equation}
and $P\left[  \overline{S}^{\theta} \|\tilde{H}^n\|_T > 1\right] \ge \alpha$ for all $n \in \mathbb{N}$. By Lemma 9.8.1 of \cite{delbaen2006mathematics}, there exist convex combinations $f^n \in \text{conv}\{\overline{S}^{\theta}\| \tilde{H}^n\|_T, \overline{S}^{\theta}\|\tilde{H}^{n+1}\|_T,...\}$  such that $f^n \to f, a.s.$ and 
\begin{equation}\label{eq:pos}
P[f > 0] > 0.
\end{equation}
Using the same weights as in the construction of $f^n$, we obtain a sequence of strategies $\widehat{H}^n$ such that
$$1/n + W^0_T(S^{\theta},\widehat{H}^n, \widehat{\lambda}) \ge f^n, a.s..$$
Therefore, the random variable $f$ 
is a $FLVR(\widehat{\lambda}, \theta)$ for $S^{\theta}$, which is a contradiction.
\end{proof}

\begin{remark}
{\rm If $H$ is a convex combination of $H^1,H^2$, 
it may happen that there are overlapping regions 
where $H^{\uparrow}_t H^{\downarrow}_t \ne 0$, for example 
when $H^{2,\uparrow}_tH^{1,\downarrow}_t \ne 0.$ 
However, we can always remove these redundant transactions from $H$ and 
the obtained strategy generates a payoff which is at least as that from $H$. Without further notice, 
from now on we interpret $H$ as the strategy after removing redundancy.}   
\end{remark}

Let $D$ be a non-empty finite subset of $\Theta$. We say that a sequence $\bff_n, n \in \mathbb{N}$ in $\prod_{\theta \in D}L^0$ Fatou-converges to $\bff$ if for each $\theta \in D$, $f^{\theta}_n$ converges to $f^{\theta}$ a.s. and $f^{\theta}_n \ge -x^{\theta}, a.s.$ for some $x^{\theta} >0$. Define
$$\bC(D) = \left\lbrace (W^0_{T}(S^{\theta}, H, \lambda) - h^{\theta})_{\theta \in D}: H \in \bigcap_{\theta \in D} \mathcal{A}^{\theta}(\lambda), h^{\theta} \in L^0_+, \theta \in D\right\rbrace \bigcap \prod_{\theta \in D}L^{\infty}(\mathcal{F}_T,P).$$
\begin{proposition}\label{pro:dense_closed}
The set $\bC(D)$ is Fatou-closed, that is $\bC(D) \cap \prod_{\theta \in D} B^{\infty}_{x^{\theta}}$ is closed in the product space $\prod_{\theta \in D} L^0(\mathcal{F}_T,P)$, for each $(x^{\theta})_{\theta \in D} \in \mathbb{R}^{D}_+$.
\end{proposition}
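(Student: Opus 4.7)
The plan is to verify closedness under a.s.\ convergence of sequences that are uniformly bounded below by $-x^\theta$ in each coordinate. Take such a sequence $\mathbf{g}_n = (g^\theta_n)_{\theta \in D}$ in $\bC(D)$ with $|g^\theta_n| \le x^\theta$ and $g^\theta_n \to g^\theta$ a.s. Write $g^\theta_n = W^0_T(S^\theta, H_n, \lambda) - h^\theta_n$ with $H_n \in \bigcap_{\theta \in D} \A^\theta(\lambda)$ and $h^\theta_n \ge 0$. Since $h^\theta_n \ge 0$, we have $W^0_T(S^\theta, H_n, \lambda) \ge g^\theta_n \ge -x^\theta$, so each $H_n$ is simultaneously $\bar x$-admissible for every $\theta \in D$, where $\bar x := \max_{\theta \in D} x^\theta$.

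By Lemma \ref{lem:bounded_P} applied to each $\theta \in D$ together with the finiteness of $D$, the family $\max_{\theta \in D} \|H_n\|_T$ is bounded in probability. I would then invoke the convex compactness of $\bar x$-admissible strategies in the $\bV$-topology, which is the key ingredient developed in Section \ref{sec:app}: passing to tail convex combinations $\tilde H_n := \sum_{k \ge n} \alpha^n_k H_k$ (with $\alpha^n_k \ge 0$, $\sum_k \alpha^n_k = 1$ and finitely many nonzero terms) yields convergence $\tilde H_n \to H$ in $\bV$ for some $H \in \bigcap_{\theta \in D} \A^\theta(\lambda)$. The corresponding convex combinations $\tilde g^\theta_n$ of the $g^\theta_k$ still converge a.s.\ to $g^\theta$ (any tail average of an a.s.\ convergent, uniformly bounded sequence converges a.s.\ to the same limit), and the weighted sums $\tilde h^\theta_n$ remain non-negative.

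The crucial observation is that $W^0_T(S^\theta, \cdot, \lambda)$ is concave in the strategy: the two Stieltjes integrals in \eqref{eq:liq} are linear in $(H^\uparrow, H^\downarrow)$, while the initial-transfer term $-H_0^+ S^\theta_0(1+\lambda) + H_0^- S^\theta_0(1-\lambda)$ and the terminal liquidation $\phi_T^+(1-\lambda)S^\theta_T - \phi_T^-(1+\lambda)S^\theta_T$ are concave piecewise-linear functions of $H_0$, respectively of $\phi_T$. Jensen's inequality then gives
\begin{equation*}
W^0_T(S^\theta, \tilde H_n, \lambda) \ge \sum_{k \ge n} \alpha^n_k W^0_T(S^\theta, H_k, \lambda) = \tilde g^\theta_n + \tilde h^\theta_n \ge \tilde g^\theta_n.
\end{equation*}

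The main obstacle, and the reason the $\bV$-topology has to be strong enough, is to send $n \to \infty$ on the left-hand side and recover $W^0_T(S^\theta, H, \lambda)$, i.e.\ to establish upper semi-continuity of $W^0_T(S^\theta, \cdot, \lambda)$ along $\tilde H_n \to H$. Here Assumption \ref{assumption_main}(i) is essential: stopping at $T_k$ makes $S^\theta$ bounded and continuous on $[0, T \wedge T_k]$, so weak convergence of the positive measures $d\tilde H^{n, \uparrow/\downarrow}$ (built into the $\bV$-topology) yields convergence of the Stieltjes integrals $\int_0^{T \wedge T_k}(1\pm\lambda) S^\theta_u\, d\tilde H^{n,\uparrow/\downarrow}_u$; the initial and terminal boundary terms pass using continuity of $x \mapsto x^\pm$ together with the convergence of total variations, and one lets $k \to \infty$ to recover the full $W^0_T$. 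Once this is in hand, $W^0_T(S^\theta, H, \lambda) \ge \limsup_n W^0_T(S^\theta, \tilde H_n, \lambda) \ge g^\theta$, and setting $h^\theta := W^0_T(S^\theta, H, \lambda) - g^\theta \ge 0$ exhibits $(g^\theta)_{\theta \in D}$ as an element of $\bC(D)$; the bound $|g^\theta| \le x^\theta$ is automatic in the a.s.\ limit.
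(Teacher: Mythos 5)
Your argument follows the same skeleton as the paper's: obtain a uniform admissibility bound for the $H_n$, use Lemma \ref{lem:bounded_P} to bound the total variations in probability, extract convergent convex combinations, exploit concavity of the liquidation value under mixing, and pass to the limit in the Stieltjes integrals pathwise. Two remarks on presentation: what actually produces the tail convex combinations $\tilde H_n \to H$ is a Koml\'os-type lemma for finite-variation processes (Lemma B.4 of Guasoni--L\'epinette--R\'asonyi, or Lemma \ref{maj}), not convex compactness per se, which is reserved in the paper for the uncountable intersection in Proposition \ref{pro:local_closed}; and the localization via $T_k$ is unnecessary, since the limit passage is done $\omega$-by-$\omega$ and a continuous path $u \mapsto S^\theta_u(\omega)$ is already bounded on $[0,T]$ (this is how Lemma 4.3 of \cite{guasoni2002optimal} is applied in the paper).

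There is, however, one genuine jump. You infer from $W^0_T(S^\theta,H_n,\lambda) \ge -x^\theta$ that $H_n$ is $x^\theta$-admissible for each $\theta \in D$. Admissibility in Definition \ref{defi:admi_appro} requires $W^0_t(S^\theta,H_n,\lambda) \ge -x^\theta$ for \emph{all} $t \in [0,T]$, and a lower bound on the terminal liquidation value does not trivially propagate backwards in time. This upgrade is exactly Theorem 1 of \cite{schachermayer2014admissible} (valid under $NFLVR(\lambda,\theta)$), which the paper invokes at this point; without it you cannot apply Lemma \ref{lem:bounded_P} with a level that is uniform in $n$, and the whole compactness step collapses. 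The same care is needed for the limit: you assert $H \in \bigcap_{\theta \in D}\A^\theta(\lambda)$ as part of the compactness package, but this must be checked by passing to the limit in the running wealth bound $W^0_t(S^\theta,\tilde H_n,\lambda) \ge -x^\theta$ (again via \cite{schachermayer2014admissible} applied to the $\tilde H_n$), as in the final display of the paper's proof. With these two justifications supplied, your argument is correct.
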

\begin{proof}
Let $\bff_n, n \in \mathbb{N}$ be a sequence in $\bC(D)$ such that for every $\theta \in D$, $f^{\theta}_n \to f^{\theta}$ in probability and $f^{\theta}_n \ge - x^{\theta}$.  We need to find $H\in \bigcap_{\theta \in D} \mathcal{A}^{\theta}(\lambda)$ such that for every $\theta \in D,$ $W^0_{T}(S^{\theta},H,\lambda) \ge f^{\theta}, a.s.$.

By taking a subsequence, we may assume that $f^{\theta}_n \to f^{\theta}, a.s.$ for all $\theta \in D$. By definition, 
$$f^{\theta}_n \le W^0_T(S^{\theta},H^n,\lambda), \forall \theta \in D$$ for some $H^n \in \bigcap_{\theta \in D} \mathcal{A}^{\theta}(\lambda)$. 
Using Theorem 1 of \cite{schachermayer2014admissible}, for every $\theta \in D$ we get that $H^n \in \mathcal{A}^{\theta}_{x^{\theta}}(\lambda)$ for all $n \in \mathbb{N}$.
Fix $\theta_0 \in D$ arbitrarily. Lemma \ref{lem:bounded_P} implies that the set $\{\|H\|_T, H \in \mathcal{A}^{\theta_0}_{x^{\theta_0}}(\lambda)\}$ is bounded in probability.  Using Lemma B.4 of \cite{guasoni2012fundamental}, there are convex combinations $\tilde{H}^{n,\uparrow} \in conv(H^{n,s,\uparrow}, H^{n+1,\uparrow},...)$ and a finite variation process $H^{\uparrow}$ such that $\tilde{H}^{n,\uparrow} \to H^{\uparrow}$ pointwise and $\|\tilde{H}^{n,\uparrow}\| \to \|H^{\uparrow}\|$ pointwise. Similarly, we find another convex combinations, still denoted by $\tilde{H}^{n,\downarrow}$, such that $\tilde{H}^{n,\downarrow} \to H^{\downarrow}$ pointwise and $\|\tilde{H}^{n,\downarrow}\| \to \|H^{\downarrow}\|$ pointwise. 

Next we prove that $H \in \bigcap_{\theta \in D} \mathcal{A}^{\theta}(\lambda)$. First, using Lemma 4.3 of \cite{guasoni2002optimal}, we obtain that 
$\tilde{H}^{n,\uparrow} \to H^{\uparrow}$ and  $\tilde{H}^{n,\downarrow} \to H^{\downarrow}$ weakly. It follows that for every $\theta \in \Theta, t \in [0,T]$,
$$W^0_t(S^{\theta},\tilde{H}^{n}, \lambda) \to W^0_t(S^{\theta},H, \lambda), a.s..$$
Therefore, $W^0_T(S^{\theta}, H, \lambda) \ge f^{\theta}, a.s.$ for all $\theta \in D$. Secondly, for every $\theta \in D$, the triangle inequality and  Theorem 1 of \cite{schachermayer2014admissible} again yield that 
\begin{eqnarray*}
	W^0_t(S^{\theta},\tilde{H}^{n},\lambda) &=& - \int_0^t{(1+\lambda)S^{\theta}_ud\tilde{H}^{n,\uparrow}_u} +  \int_0^t{(1-\lambda)S^{\theta}_ud\tilde{H}^{n,\downarrow}_u} \\
	&+& \tilde{H}^{n}_tS^{\theta}_t - \lambda |\tilde{H}^{n}_t|S^{\theta}_t\\ &\ge& -x^{\theta}, \text{ a.s. for } 0 \le t \le T. 
\end{eqnarray*}
In other words, $H \in \mathcal{A}^{\theta}_{x^{\theta}}(\lambda)$ for all $\theta \in D$.
\end{proof}
\begin{remark}
{\rm{}
Fatou-closedness is studied in the quasi-sure approach by \cite{maggis2018fatou}. In their topological setting, 
Fatou-closedness, denoted by $(FC)$, does not imply ``weak star'' closedness, denoted by $(WC)$. An additional condition, 
namely $\mathcal{P}$-sensitivity, is required. And it is proved that for a convex and monotone set, 
$(WC) = (FC) + \mathcal{P}$-sensitivity. When the set of priors $\mathcal{P}$ is dominated, $\mathcal{P}$-sensitivity 
is always satisfied, but this is not the case when $\mathcal{P}$ is non-dominated. In comparison, our Proposition 
\ref{pro:dense_closed} resembles the dominated case (although the laws of $S^{\theta}$ are not necessarily dominated), 
where the number of uncertain models is finite. Nevertheless, our ``weak star" closedness, that is 
$\bw^*$-closedness, is proved in Proposition \ref{pro:local_closed} below by using a different technique. Fatou-closedness 
is also discussed in the pathwise approach by \cite{dito2020} as a given property rather than a proved one.}    
\end{remark}
Recall
$$\bC = \left\lbrace (\bW^0_{T}(H, \lambda) - \bh) : H \in \A(\lambda), \bh \in \bL^0_+\right\rbrace \bigcap \prod_{\theta \in \Theta}L^{\infty}(\mathcal{F}_T,P).$$
Now, we are able to prove
\begin{proposition}\label{pro:local_closed}
The convex cone $\bC$ is $\bw^*$-closed.
\end{proposition}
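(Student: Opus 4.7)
The plan is to combine two ingredients: the finite-dimensional $w^*$-closedness of each $\bC(D)$, obtained from Proposition \ref{pro:dense_closed} via Krein--Smulian applied to the Banach space $\bigoplus_{\theta\in D}L^{1}$, and the convex compactness of the space of admissible strategies in $\bV$ developed in Section \ref{sec:app}. Let $\bff$ lie in the $\bw^*$-closure of $\bC$ and set $x^{\theta}:=\|(f^{\theta})^{-}\|_{\infty}$; the goal is to produce a single $H\in\A(\lambda)$ with $W^{0}_{T}(S^{\theta},H,\lambda)\geq f^{\theta}$ a.s.\ for every $\theta\in\Theta$.

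For a finite $D\subset\Theta$, the space $\bigoplus_{\theta\in D}L^{1}$ is a finite direct sum of Banach spaces, hence itself Banach, and its dual is $\prod_{\theta\in D}L^{\infty}$ with the product $w^*$-topology. Krein--Smulian then asserts that a convex subset of this dual is $w^*$-closed iff it meets every closed product-ball $\prod_{\theta\in D}B^{\infty}_{x^{\theta}}$ in a $w^*$-closed set; on each such bounded block, $w^*$-closedness is equivalent to Fatou-closedness through a standard Koml\'os/Mazur extraction of a.s.-convergent convex combinations. Proposition \ref{pro:dense_closed} therefore upgrades to the $w^*$-closedness of $\bC(D)$. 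Since the coordinate projection $\pi_{D}:\bL^{\infty}\to\prod_{\theta\in D}L^{\infty}$ is $\bw^*$-to-$w^*$ continuous and $\pi_{D}(\bC)\subseteq\bC(D)$, I obtain $\pi_{D}(\bff)\in\bC(D)$ for every finite $D$: there exists $H_{D}\in\bigcap_{\theta\in D}\A^{\theta}_{x^{\theta}}(\lambda)$ with $W^{0}_{T}(S^{\theta},H_{D},\lambda)\geq f^{\theta}$ a.s.\ for all $\theta\in D$.

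The crucial step is to merge the family $\{H_{D}\}$, indexed by the directed set of finite subsets of $\Theta$, into one strategy $H\in\A(\lambda)$. The sets
\[
K_{D}:=\Bigl\{H\in\bigcap_{\theta\in D}\A^{\theta}_{x^{\theta}}(\lambda)\ :\ W^{0}_{T}(S^{\theta},H,\lambda)\geq f^{\theta}\text{ a.s.\ for every }\theta\in D\Bigr\}
\]
are convex, non-empty and downward-directed under inclusion. If the set $\bigcap_{\theta\in\Theta}\A^{\theta}_{x^{\theta}}(\lambda)$, or a closed convex superset of all the $K_{D}$, is convexly compact in a topology in which the constraints defining each $K_{D}$ are closed, then the finite-intersection property forces $\bigcap_{D}K_{D}\neq\emptyset$, and any element of this intersection realises $\bff$ exactly as in the proof of Proposition \ref{pro:dense_closed}: weak convergence of $(H^{\uparrow},H^{\downarrow})$ together with convergence of total variations transfers into a.s.\ convergence of $W^{0}_{T}(S^{\theta},\cdot,\lambda)$ for every $\theta$ along a common subnet. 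The main obstacle is precisely this combining step: $\Theta$ being potentially uncountable, a diagonal subsequence argument and the Koml\'os theorem in $L^{0}$ both fail to produce a single admissible strategy working simultaneously for all $\theta$. The substitute is a net-level convex compactness of $\bV$ in the sense of Zitkovic, whose verification from Lemma \ref{lem:bounded_P} together with Helly-type extractions of finite-variation processes is the technical heart of Section \ref{sec:app}.
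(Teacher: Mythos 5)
Your proposal is correct and follows essentially the same route as the paper: Fatou-closedness of $\bC(D)$ upgraded to $w^*$-closedness via Krein--Smulian on the finite direct sum $\bigoplus_{\theta\in D}L^{1}$ to handle each finite $D$, followed by convex compactness of a bounded set of strategies in $\bV$ (Proposition \ref{pro:convex_compact_V}, fed by Lemma \ref{lem:bounded_P}) to pass from the downward-directed family of nonempty closed convex sets $K_{D}$ to a nonempty total intersection. The paper implements the gluing by fixing a single $\theta_{0}$ and working inside the convexly compact set coming from $\mathcal{A}^{\theta_{0}}_{x^{\theta_{0}}}(\lambda)$, which is exactly the ``closed convex superset of all the $K_{D}$'' you invoke.
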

\begin{proof}
Let $\bff_{\alpha}, \alpha \in I$ be a net in $\bC$, i.e. $ \bff_{\alpha} \le \bW^0_T(H_{\alpha}), H_{\alpha} \in \A(\lambda),$ 
such that $\bff_{\alpha} \to \bff$ in the $\bw^*$ topology for some $\bff \in \bL^{\infty}$. We need to prove that $\bff \in \bC$, that is there exists $H \in \mathcal{A}(\lambda)$ such that for all $\theta \in \Theta$, $W^0_T(S^{\theta}, H, \lambda) \ge f^{\theta}, a.s..$ 

For each $\theta \in \Theta$, we define 
$$\mathcal{H}^{\theta} = \{ H \in \bV:  H \in \A^{\theta}(\lambda) \text{ and } W^0_T(S^{\theta}, H, \lambda) \ge f^{\theta}, a.s. \}.$$
The set $\mathcal{H}^{\theta}$ is clearly convex for each $\theta \in \Theta$. In addition, using Lemma 4.3 of \cite{guasoni2002optimal}, we can prove that $\mathcal{H}^{\theta}$ is closed in $\mathbf{V}$. If we can prove 
\begin{equation}\label{eq:cap}
\bigcap_{\theta \in \Theta} \H^{\theta} \ne \emptyset
\end{equation}
then the proof is complete. First, we will prove that 
\begin{equation}\label{eq:finite_intersec}
\mathcal{H}^{D}=\bigcap_{\theta \in D} \H^{\theta} \ne \emptyset,
\end{equation}
where $D$ is an arbitrary finite subset of $\Theta$.

Proposition \ref{pro:dense_closed} implies that the set $\bC(D)$ 
is Fatou-closed. Therefore, using Proposition \ref{pro:weak_closed_ball}, the set $\bC(D)$ is closed in the $\bw^*$ topology of $\prod_{\theta \in D}L^{\infty}(\mathcal{F}_T,P)$. Since $f^{\theta}_{\alpha} \to f^{\theta}$ in the $w^*$ topology for each $\theta \in D$, and we obtain that $(f^{\theta})_{\theta \in D} \in \bC(D)$, and thus, (\ref{eq:finite_intersec}) holds true. 


Fix $\theta_0 \in \Theta$ arbitrarily. Since $x^{\theta_0}= \|f^{\theta_0}\| < \infty$, Lemma \ref{lem:bounded_P} shows that the set $\{ \|H\|_T, H \in \mathcal{A}^{\theta_0}_{x^{\theta_0}}(\lambda) \}$ is bounded in $L^0(\mathcal{F}_T,P)$, and thus convexly compact, by Proposition \ref{pro:convex_compact_V}. Since $$\bigcap_{\theta \in \Theta} \mathcal{H}^{\theta} = \bigcap_{D \in Fin(\Theta)} \mathcal{H}^{D \cup \{\theta_0\}} ,$$
we conclude that (\ref{eq:cap}) holds. The proof is complete.
\end{proof}

\begin{corollary}\label{cor}
For every $0 < \lambda < 1, \theta \in \Theta, A \in \mathcal{F}_T, P[A] >0$, the set $\mathcal{C}(\theta, \lambda, A)$ is nonempty. 
\end{corollary}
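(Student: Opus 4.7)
The plan is to reduce the corollary to the single-model fundamental theorem of asset pricing (FTAP) with transaction costs, exactly in the spirit of Remark \ref{kabi} applied to $D=\{\bar\theta\}$.

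Fix $\bar\theta\in\Theta$, $\lambda\in(0,1)$, and $A\in\mathcal{F}_T$ with $P[A]>0$. By Assumption \ref{assumption_main}(ii), the continuous price process $S^{\bar\theta}$ satisfies $NFLVR(\lambda',\bar\theta)$ for every $\lambda'\in(0,1)$; in particular, for some $\lambda'<\lambda$. Apply the classical FTAP for continuous price processes with proportional transaction costs (\cite{guasoni2012fundamental}) to the individual model $S^{\bar\theta}$: it produces an ordinary $\lambda$-consistent price system $(\widetilde Z, M)$, where $\widetilde Z$ is a strictly positive martingale with $\widetilde Z_0=1$ (so $\widetilde Z_T\in L^1_+$, $\widetilde Z_T>0$ a.s.) and $M$ is a local martingale satisfying
$$
(1-\lambda)\widetilde Z_t S^{\bar\theta}_t \;\le\; M_t \;\le\; (1+\lambda)\widetilde Z_t S^{\bar\theta}_t,\ t\in[0,T].
$$

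Now build a robust price system by padding with zeros. Define $\bZ_T\in\bigoplus_{\theta\in\Theta}L^1$ by $Z^{\bar\theta}_T=\widetilde Z_T$ and $Z^{\theta}_T=0$ for every $\theta\neq\bar\theta$; this vector has singleton support, so belongs to the direct sum. Condition (i) of Definition \ref{def:pricing_func} is satisfied because $E[Z^{\bar\theta}_T1_A]=E[\widetilde Z_T1_A]>0$ since $\widetilde Z_T>0$ a.s.\ and $P[A]>0$. For condition (ii), the sum in \eqref{eq:sandwich} collapses to the single nonzero term, and since $Z^{\bar\theta}_t=E[\widetilde Z_T\mid\mathcal{F}_t]=\widetilde Z_t$ by the martingale property, the sandwich inequality inherited from the classical CPS is exactly what is required.

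No serious obstacle is expected: the argument is essentially the one already sketched in Remark \ref{kabi}, and the only substantive input is the existence of a classical consistent price system for a single continuous process satisfying $NFLVR(\lambda',\bar\theta)$ for some $\lambda'<\lambda$, which is a standard result. The mild subtlety worth flagging is to keep track of the correct normalization ($\widetilde Z_0=1$ so that $\widetilde Z_T\in L^1$) and to note that although the general definition of $\mathcal{C}(\bar\theta,\lambda,A)$ only demands positivity of $Z^{\bar\theta}_T$ on a single set of positive measure, we actually obtain the stronger a.s.\ positivity for free from the classical CPS.
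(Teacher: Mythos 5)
Your proof is correct, but it takes a genuinely different route from the paper. You reduce the corollary to the single-model fundamental theorem of \cite{guasoni2012fundamental}: Assumption \ref{assumption_main}(ii) gives $NFLVR(\lambda',\bar\theta)$ for every $\lambda'\in(0,1)$, the classical FTAP for continuous positive processes (local boundedness being automatic from continuity) yields an ordinary $\lambda$-consistent price system for $S^{\bar\theta}$, and padding with zeros produces an element of $\mathcal{C}(\bar\theta,\lambda,A)$ — exactly the construction the authors themselves anticipate in Remark \ref{kabi}. This even delivers the stronger conclusion $Z^{\bar\theta}_T>0$ a.s., not merely positivity on $A$. The paper instead argues intrinsically: it separates the compact set $1_A\mathbf{1}^{\bar\theta}$ from the $\bw^*$-closed cone $\bC$ (Proposition \ref{pro:local_closed}) by Hahn--Banach, reads off the sandwich inequalities from the simple strategies $\pm 1_B 1_{]\sigma,\tau]}$, and builds $M$ via Lemma \ref{lem:sandwich} by pasting along the localizing sequence $T_n$. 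What each approach buys: yours is shorter and more modular, but imports the full classical FTAP as a black box and only produces the ``padded'' elements of $\mathcal{Z}(\lambda)$; the paper's proof is deliberately a dry run for Theorem \ref{super}, where the separating functional is dictated by the claim $\mathbf{G}$, generically has support on several coordinates, and cannot be obtained by padding a single-model CPS — the phrase ``a similar argument as in the proof of Corollary \ref{cor}'' in that proof is precisely the machinery you have bypassed. So your argument settles the stated corollary, but if one adopted it, the separation-plus-sandwich construction would still have to be carried out in full inside the proof of the main theorem.
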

\begin{proof}
Let us fix $\bar\theta \in \Theta$ arbitrarily. By Proposition \ref{pro:local_closed}, the convex set $\bC$ is $\bw^*$-closed. The compact set $1_A\mathbf{1}^{\bar\theta}$ and the closed convex set $\bC$ are disjoint. Applying the Hahn-Banach theorem, there exists $\mathbf{Q} = (Z^{\theta}_T)_{\theta \in \Theta} \in \bigoplus_{\theta \in \Theta} L^1(\mathcal{F}_T,P)$ such that 
	$$  \sup_{\bff \in \bC} \mathbf{Q}(\bff) \le \alpha < \beta \le \mathbf{Q}(1_A\mathbf{1}^{\bar\theta}).$$
	Since $\mathbf{0} \in \bC$, it follows that $\alpha \ge 0$. Since $\bC$ is a cone, we must have 
	\begin{equation}\label{eq:Q}
	\mathbf{Q}(\mathbf{f}) \le 0,  \forall \mathbf{f} \in \bC,
	\end{equation}
and as a consequence,  $Z^{\theta}_T \ge 0, \forall \theta \in \Theta$. Note that $E[Z^{\bar\theta}_T1_A] > 0.$

For any stopping times $\sigma \le \tau \le T_n$ and $B \in \F_{\sigma}$, the strategy $H = \pm 1_{B} 1_{]\sigma, \tau] }$ belongs to $\mathcal{A}(\lambda)$. Therefore, from (\ref{eq:Q}) we obtain
\begin{equation}\label{eq:XY}
\bQ(((1-\lambda)\bS_{\tau} - (1+\lambda)\bS_{\sigma})1_B) \le 0, \qquad \bQ(((1-\lambda)\bS_{\sigma} - (1+\lambda) \bS_{\tau})1_B) \le 0.
\end{equation} 
Define 
$$X_{\tau} =  \sum_{\theta \in \Theta} Z^{\theta}_{\tau} (1-\lambda) S^{\theta}_{\tau}, \qquad  Y_{\sigma} = \sum_{\theta \in \Theta} Z^{\theta}_{\sigma} (1+ \lambda) S^{\theta}_{\sigma}.$$
where $Z^{\theta}_t = E[Z^{\theta}_{T}|\mathcal{F}_t]$.
We compute that
\begin{eqnarray*}
	&&E\left[ (X_{\tau} - Y_{\sigma}) 1_B \right] \\
	&=& E\left[ \left( \sum_{\theta \in \Theta}  Z^{\theta}_{\tau} (1-\lambda) S^{\theta}_{\tau}   -  \sum_{\theta \in \Theta}  Z^{\theta}_{\sigma} (1+ \lambda)S^{\theta}_{\sigma} \right) 1_B \right] \\
	&=& E\left[ \left( \sum_{\theta \in \Theta} E\left[  Z^{\theta}_{T}|\F_{\tau}\right]  (1-\lambda) S^{\theta}_{\tau}   \right) 1_B \right] - E\left[ \left( \sum_{\theta \in \Theta}    E\left[  Z^{\theta}_{T}|\F_{\sigma}\right] (1+\lambda) S^{\theta}_{\sigma} \right) 1_B \right] \\
	&\le& 0,
\end{eqnarray*}
by the tower law of conditional expectation and (\ref{eq:XY}). Similarly, we obtain
$$E\left[ (X_{\sigma} - Y_{\tau}) 1_B \right] \le 0. $$
Using Lemma \ref{lem:sandwich}, there is a martingale $M^n$ such that on $\llbracket0,T_n\rrbracket$,
$$(1-\lambda) \sum_{\theta \in \Theta} Z^{\theta}_t S^{\theta}_t \le M^n_t \le (1+ \lambda) \sum_{\theta \in \Theta} Z^{\theta}_t S^{\theta}_t, a.s.$$
However, from the proof of Lemma 6.2 of \cite{guasoni2012fundamental}, it can be checked that $M^{n+1}$ and $M^n$ coincide on $\llbracket0,T_n\rrbracket$. Therefore, the local martingale $M$ obtained by pasting the processes $M^n, n \in \mathbb{N}$ together satisfies
$$(1-\lambda_t) \sum_{\theta \in \Theta} Z^{\theta}_t S^{\theta}_t \le M_t \le  (1+\lambda_t)\sum_{\theta \in \Theta} Z^{\theta}_t S^{\theta}_t, a.s., t \in [0,T].$$
In other words, we obtain an element in $\mathcal{C}(\bar\theta,\lambda, A)$. 
\end{proof}

\begin{remark}
	One may ask: if Assumption \ref{assumption_main} (ii) is relaxed but a 
	robust no free lunch condition is imposed, is the set $\bC$ (or an appropriate enlargement of 
	it) $\bw^*$-closed? This would result in a robust version of FTAP but we do not know yet how to handle this case.
\end{remark}

We recall Lemma 6.3 of \cite{guasoni2012fundamental}.
\begin{lemma}\label{lem:sandwich}
	Let $(X_t)_{t \in [0,T]}$ and $(Y_t)_{t \in [0,T]}$ be two c\`adl\`ag bounded processes. The following conditions are equivalent:
	\begin{itemize}
		\item[(i)] There exists a c\`adl\`ag martingale $(M_t)_{t \in [0,T]}$ such that
		$$X \le M \le Y, \qquad a.s. $$
		\item[(ii)] For all stopping times $\sigma, \tau$ such that $0 \le \sigma \le \tau \le T, a.s.$, we have
		\begin{equation*}\label{sandwich2}
		E[X_{\tau}|\mathcal{F}_{\sigma}] \le Y_{\sigma}, \qquad \text{and} \qquad E[Y_{\tau}|\mathcal{F}_{\sigma}] \ge X_{\sigma}, \qquad a.s..
		\end{equation*}
	\end{itemize}
\end{lemma}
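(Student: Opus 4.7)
The direction (i)$\Rightarrow$(ii) is immediate from optional sampling: if $M$ is a bounded c\`adl\`ag martingale with $X\le M\le Y$, then for stopping times $\sigma\le\tau\le T$ one has $E[X_{\tau}|\F_{\sigma}]\le E[M_{\tau}|\F_{\sigma}]=M_{\sigma}\le Y_{\sigma}$, and the symmetric inequality likewise. Since $X,Y$ are bounded, the conditioning causes no integrability trouble. All the substance is in (ii)$\Rightarrow$(i).

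For the converse I would proceed in three stages. First, taking $\sigma=\tau=t$ in (ii) gives $X_{t}\le Y_{t}$ a.s.\ for every rational $t$, and right-continuity promotes this to $X\le Y$ on $[0,T]$. Second, discretise time along the dyadic grid $\Pi_{n}=\{kT/2^{n}:0\le k\le 2^{n}\}$ and build, for each $n$, a discrete martingale $(M^{n}_{t})_{t\in\Pi_{n}}$ with $X_{t}\le M^{n}_{t}\le Y_{t}$ on $\Pi_{n}$. The cleanest way I see is a finite-dimensional Hahn--Banach argument: the affine subspace of discrete martingales on $\Pi_{n}$ and the closed convex ``box'' $\prod_{t\in\Pi_{n}}[X_{t},Y_{t}](\omega)$ live in $\prod_{t\in\Pi_{n}}L^{\infty}(\F_{t})$; if the intersection were empty, a separating continuous linear functional would decompose as a pair of signed measures which, after optional rearrangement into positive and negative parts along the grid, would produce stopping times $\sigma\le\tau$ with values in $\Pi_{n}$ for which $E[X_{\tau}1_{B}]>E[Y_{\sigma}1_{B}]$ on some $B\in\F_{\sigma}$, contradicting (ii). Alternatively (and more concretely) one runs the Snell envelope of $X$ restricted to the grid, obtains a discrete supermartingale $U^{n}$ with $X\le U^{n}\le Y$ by (ii), and then uses the Doob decomposition $U^{n}=M^{n}-A^{n}$; the upper bound $M^{n}\le Y$ is verified by noting that $A^{n}$ only grows when $U^{n}=X$ and invoking the other half of (ii) applied at the optional stopping times that witness this growth.

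Third, pass to the limit. The processes $M^{n}$ are bounded uniformly by $\|X\|_{\infty}\vee\|Y\|_{\infty}$, so Koml\'os / convex-combination arguments in $L^{1}(\F_{T})$ yield a limit $M_{T}^{\infty}$; defining $M_{t}=E[M_{T}^{\infty}|\F_{t}]$ and choosing a c\`adl\`ag modification (possible since the filtration is right-continuous) produces a c\`adl\`ag martingale. Right-continuity of $X,Y$ and of $M$ together with the grid-wise sandwich $X_{t}\le M^{n}_{t}\le Y_{t}$ for $t\in\Pi_{n}$ then propagate to $X\le M\le Y$ on the whole of $[0,T]$, because $\bigcup_{n}\Pi_{n}$ is dense in $[0,T]$.

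The main obstacle is the discrete step, and more precisely the two-sided use of (ii): the one-step inequalities $E[X_{k+1}|\F_{k}]\le Y_{k}$ and $E[Y_{k+1}|\F_{k}]\ge X_{k}$ are not sufficient on their own to keep both the supermartingale $U^{n}$ and its martingale part $M^{n}$ inside $[X,Y]$; one really has to invoke the inequalities at general stopping times (not just deterministic ones), exploiting that the ``bad'' event where $A^{n}$ charges mass is itself an optional set. This is where the strength of the sandwich hypothesis over just $X\le Y$ is consumed, and it is the step I would expect to carry out most carefully.
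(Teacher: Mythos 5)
Your direction (i)$\Rightarrow$(ii) and your stage three (Koml\'os limit of the grid martingales, c\`adl\`ag modification, propagation of the sandwich from the dense grid by right-continuity) are fine. Note also that the paper itself does not prove this lemma: it is quoted verbatim as Lemma 6.3 of Guasoni--L\'epinette--R\'asonyi (2012), so the only thing to check is whether your discrete step works. It does not, at least not in the ``concrete'' form you propose. The Doob decomposition $U^{n}=M^{n}-A^{n}$ of the Snell envelope of $X$ does \emph{not} in general satisfy $M^{n}\le Y$, even though $A^{n}$ charges only the set $\{U^{n}=X\}$ and even though the second half of (ii) holds. Counterexample on three dates $0,1,2$ with $\F_{1}=\sigma(\epsilon)$, $\epsilon=\pm1$ fair: take $X=(1,0,0)$ and $Y_{0}=1$, $Y_{1}=Y_{2}=2\cdot 1_{\{\epsilon=1\}}$. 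One checks directly that (ii) holds (the only binding case is $E[Y_{\tau}]\ge X_{0}=1$, which is an equality), and a sandwiched martingale exists, namely $(1,Y_{1},Y_{1})$. But the Snell envelope of $X$ is $U=(1,0,0)$, its compensator is $A=(0,1,1)$, so $M^{n}=U+A=(1,1,1)$, and $M^{n}_{1}=1>Y_{1}=0$ on $\{\epsilon=-1\}$. The information ``$A$ grows only where $U=X$'' plus the second inequality in (ii) simply does not control $U+A$ from above.

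The standard repair, and the route actually taken in the literature this lemma comes from, uses \emph{two} envelopes rather than one Doob decomposition: on the grid set $\hat{X}_{k}:=\esssup_{\tau\ge k}E[X_{\tau}\,|\,\F_{k}]$ (the smallest supermartingale above $X$) and $\check{Y}_{k}:=\essinf_{\tau\ge k}E[Y_{\tau}\,|\,\F_{k}]$ (the largest submartingale below $Y$). The key point is $\hat{X}_{k}\le\check{Y}_{k}$, i.e.\ $E[X_{\tau_{1}}\,|\,\F_{k}]\le E[Y_{\tau_{2}}\,|\,\F_{k}]$ for \emph{arbitrary} stopping times $\tau_{1},\tau_{2}\ge k$ with no order between them; this follows from (ii) by splitting on $\{\tau_{1}\le\tau_{2}\}\in\F_{\tau_{1}}$ and $\{\tau_{1}>\tau_{2}\}\in\F_{\tau_{2}}$ and applying the two halves of (ii) with $\sigma=\tau_{1}$, $\tau=\tau_{1}\vee\tau_{2}$ and $\sigma=\tau_{2}$, $\tau=\tau_{1}\vee\tau_{2}$ respectively. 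Since $E[\hat{X}_{k+1}\,|\,\F_{k}]\le\hat{X}_{k}\le\check{Y}_{k}\le E[\check{Y}_{k+1}\,|\,\F_{k}]$, one then builds the martingale forward by $\F_{k}$-measurable convex interpolation between $\hat{X}_{k+1}$ and $\check{Y}_{k+1}$. Your Hahn--Banach variant is not wrong in principle (the box is coordinatewise weak-star compact, so separation from the weak-star closed space of grid martingales is legitimate), but the step you wave at --- turning the separating functional $(g_{t})$ with $\sum_{t}g_{t}=0$ into a single ordered pair $\sigma\le\tau$ and a set $B\in\F_{\sigma}$ violating (ii) --- is exactly the hard part, and it again requires the unordered two-stopping-time inequality above, not (ii) as literally stated. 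I would rewrite the discrete step around the two-envelope argument.
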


\subsection{Proof of Theorem \ref{super}} 

Let $z$ and $H \in \mathcal{A}(\lambda)$ satisfy
$$z + W^0_T(S^{\theta},H,\lambda) \ge G^{\theta}, a.s., \ \theta \in \Theta.$$
For any $(\bZ, M) \in \mathcal{Z}(\lambda)$, we have
\begin{eqnarray}\label{eq1}
z + E\left[ \sum_{\theta \in \Theta} Z^{\theta}_T W^0_T(S^{\theta}, H, \lambda) \right] \ge E\left[ \sum_{\theta \in \Theta} Z^{\theta}_T G^{\theta} \right] .
\end{eqnarray}
Suppose that $Z^{\theta}_T$ has positive values only if $\theta \in D$ for some $D \in Fin(\Theta)$. 
Denote by $H^{\uparrow,c}, H^{\downarrow,c}$ the continuous parts of $H^{\uparrow}, H^{\downarrow}$, respectively. We define the process $I^{\theta}_t := \int_0^t{(1+\lambda)S^{\theta}_u dH^{\uparrow}_u}$. Using integration by parts, we obtain that
\begin{eqnarray}
d\left( Z^{\theta}_tI^{\theta}_t\right) &=& I^{\theta}_{t-} dZ^{\theta}_t + Z^{\theta}_{t-} (1+\lambda)S^{\theta}_t dH^{\uparrow}_t  + d[Z^{\theta},I^{\theta}]_t \nonumber\\
&=& I^{\theta}_{t-} dZ^{\theta}_t + Z^{\theta}_{t-} (1+\lambda)S^{\theta}_t dH^{\uparrow,c}_t + Z^{\theta}_{t-} (1+\lambda)S^{\theta}_t \Delta H^{\uparrow}_t + \Delta Z^{\theta}_t \Delta I^{\theta}_t \nonumber\\
&=& I^{\theta}_{t-} dZ^{\theta}_t + Z^{\theta}_{t-} (1+\lambda)S^{\theta}_{t-} dH^{\uparrow,c}_t + Z^{\theta}_{t} (1+\lambda)S^{\theta}_t \Delta H^{\uparrow}_t \label{eq:p1},
\end{eqnarray}
noting that $S^{\theta}$ is continuous and $I^{\theta}$ is of finite variation. Similarly, we define $J^{\theta}_t = \int_0^t{(1-\lambda)S^{\theta}_u dH^{\downarrow}_u}$ and compute
\begin{equation}\label{eq:p2}
d\left( Z^{\theta}_tJ^{\theta}_t\right) = J^{\theta}_{t-} dZ^{\theta}_t + Z^{\theta}_{t-} (1-\lambda)S^{\theta}_{t-} dH^{\downarrow,c}_t + Z^{\theta}_{t} (1-\lambda)S^{\theta}_t \Delta H^{\downarrow}_t.
\end{equation}
Therefore, (\ref{eq:p1}), (\ref{eq:p2}) and the property of $M$ yield
\begin{eqnarray*}
	\sum_{\theta \in D} Z^{\theta}_tW^0_t(S^{\theta}, H, \lambda) &\le& \sum_{\theta \in D} \int_0^t{(J^{\theta}_{u-} - I^{\theta}_{u-}) dZ^{\theta}_u}  \\
	&-& \int_0^t{M_{u-}dH^{c}_u} + \sum_{u \le s \le t} M_u \Delta H_u + H_t M_t\\
	&=& \sum_{\theta \in D} \int_0^t{(J^{\theta}_{u-} - I^{\theta}_{u-}) dZ^{\theta}_u} + \int_{0}^t{H_{u-} dM_u}. 
\end{eqnarray*}
The RHS of the above inequality is a local martingale and thus a supermartingale as $W^0_t(S^{\theta}, H,\lambda)$ is uniformly bounded from below. From (\ref{eq1}), we have 
\begin{eqnarray*}
	E\left[ \sum_{\theta \in D} Z^{\theta}_TG^{\theta}  \right] &\le& z + E\left[ \sum_{\theta \in D} Z^{\theta}_T W^0_T(S^{\theta}, H, \lambda)  \right]\\
	&=& z.
\end{eqnarray*}
Therefore, $z \ge \sup_{(\bZ,M) \in \mathcal{Z}(\lambda)} E\left[ \sum_{\theta \in \Theta} Z^{\theta}_TG^{\theta}  \right].$

Next, we prove the reverse inequality for the case $G^{\theta}, \theta \in \Theta$ are bounded. Let $z \in \mathbb{R}$ be such that there is no strategy $H \in \mathcal{A}(\lambda)$ satisfying
$$ W^z_T(S^{\theta}, H, \lambda) \ge G^{\theta}, a.s., \ \forall \theta \in \Theta.$$
In other words, $(G^{\theta})_{\theta \in \Theta} - z \notin \bC.$ Applying the Hahn-Banach theorem, there exists $\mathbf{Q} = (Z^{\theta}_T)_{\theta \in \Theta} \in \bigoplus_{\theta \in \Theta} L^1(\mathcal{F}_T,P)$ such that 
$$  \sup_{\bff \in \bC} \mathbf{Q}(\bff) \le \alpha < \beta \le \mathbf{Q}\left((G^{\theta})_{\theta \in \Theta} - z \right) .$$ 
Since $\bC$ is a cone containing $-\bL^{\infty}_+$, it is necessarily that
$$ \sup_{\bff \in \bC} \mathbf{Q}(\bff)  = 0, \qquad  \mathbf{Q}\left((G^{\theta})_{\theta \in \Theta} - z \right) > 0.$$
We also deduce that $Z^{\theta}_T \ge 0, a.s., \theta \in \Theta$ and it is possible to normalize $\bQ$ such that $\bQ(\mathbf{1}) = 1$. A similar argument as in the proof of Corollary \ref{cor} gives the martingale $M$ associated to $\bZ$. This means $(\bZ, M) \in \mathcal{Z}(\lambda)$ and that 
$$z < \mathbf{Q}\left((G^{\theta})_{\theta \in \Theta} \right) \le \sup_{(\bZ,M) \in \mathcal{Z}(\lambda) } E\left[ \sum_{\theta \in \Theta} Z^{\theta}_T G^{\theta} \right] .$$

Finally we investigate the case $G^{\theta} \ge 0, \theta \in \Theta$. Let $z \in \mathbb{R}$ be the number such that
$z \ge \sup_{(\bZ,M) \in \mathcal{Z}(\lambda) } E\left[ \sum_{\theta \in \Theta} Z^{\theta}_T G^{\theta}) \right].$ Then, for all $n \in \mathbb{N}$, we have
$$z \ge  \sup_{(\bZ,M) \in \mathcal{Z}(\lambda) } E\left[ \sum_{\theta \in \Theta} Z^{\theta}_T G^{\theta} \wedge n \right]. $$
The result for bounded $G$ implies for each $n \in \mathbb{N}$, there exists $H^n \in \mathcal{A}(\lambda)$ such that
$$ z + W^0_T(S^{\theta},H^n, \lambda) \ge G^{\theta} \wedge n, a.s., \forall \theta \in \Theta.$$
For each $\theta \in \Theta$, Theorem 1 of \cite{schachermayer2014admissible} yields $H^n \in \mathcal{A}^{\theta}_{z}(\lambda)$ for all $n \in \mathbb{N}$. Now, we repeat the argument in Proposition \ref{pro:dense_closed}. For a fixed $\theta_0 \in \Theta$, the set $\{\|H\|_T, H \in \mathcal{A}^{\theta_0}_{z}(\lambda) \}$ is bounded in probability by Lemma 3.3. Using Lemma B.4 of \cite{guasoni2012fundamental}, there are convex combinations  $\tilde{H}^{n,\uparrow}, \tilde{H}^{n,\downarrow}$ and finite variation processes $H^{\uparrow}, H^{\downarrow}$ such that $\tilde{H}^{n,\uparrow} \to H^{\uparrow}, \tilde{H}^{n,\downarrow} \to H^{\downarrow}$, and $\|\tilde{H}^{n,\uparrow}\| \to \|H^{\uparrow}\|,\|\tilde{H}^{n,\downarrow}\| \to \|H^{\downarrow}\|$ pointwise. From Lemma 4.3 of \cite{guasoni2002optimal}, we obtain 
$\tilde{H}^{n,\uparrow} \to H^{\uparrow}$ and  $\tilde{H}^{n,\downarrow} \to H^{\downarrow}$ weakly. It follows that for every $\theta \in \Theta, t \in [0,T]$,
$$W^0_t(S^{\theta},\tilde{H}^{n}, \lambda) \to W^0_t(S^{\theta},H, \lambda), a.s..$$
Since $\mathbf{G}$ is bounded from below, $H \in \mathcal{A}(\lambda)$ and  $z +  W^0_T(S^{\theta},H, \lambda) \ge G^{\theta}, a.s., \theta \in \Theta.$ The proof is complete.
\section{Auxiliary results}\label{sec:app}






\subsection{Direct sums and product spaces}

A bilinear pairing is a triple $(X,Y,\left\langle\cdot , \cdot\right\rangle )$ where $X,Y$ are vector 
spaces over $\mathbb{R}$ and $\left\langle \cdot,\cdot \right\rangle$ is a bilinear map from $X \times Y$ to $\mathbb{R}$. 
Let $(E,u)$ be a topological vector space. Let $E^*=(E,u)^*$ be its dual, i.e.,\ the set of all 
continuous linear maps from $E$ to $\mathbb{R}$. Then there is a natural bilinear
pairing $(E,E^*, \left\langle\cdot,\cdot\right\rangle)$. We denote by $\sigma(E,E^{*})$ the usual weak topology
on $E$ and by $\sigma(E^{*},E)$ the weak-star topology on $E^{*}$.


Let $I$ be a non-empty set and, for each $i \in I$, let $(X_i,\tau_i)$ be a locally convex topological spaces. 
The topological direct sum of the family $(X_i,\tau_i)$, denoted by 
$\bigoplus_{i \in I} (X_i, \tau_i)$, is the locally convex space defined as follows. 
The vector space $\bigoplus_{i \in I} X_i$ is the set of tuples $(x_i)_{i \in I}$ with $x_i \in X_i$ such that 
$x_i =0$ for all but finitely many $i$. It is equipped with the inductive topology with respect to the canonical embeddings 
\begin{eqnarray*}
e_i: (X_i,\tau_i) &\to& X\\
x_i &\mapsto& x = (x^i),
\end{eqnarray*}
where $x^i = x_i$ and $x^j = 0$ whenever $j \ne i$, i.e.\ the 
strongest locally convex topology on $\bigoplus_{i \in I} X_i$ such that all these embeddings are continuous. 

The product space of  of the family $(X_i,\tau_i)$, denoted by $\prod_{i \in I}(X_i,\tau_i)$, 
consists of the product set $\prod_{i \in I}X_i$ and a topology $\tau$ having as its basis the family
$$\left\lbrace  \prod_{i \in I}O_i:  O_i \in \tau_i 
\text{ and } O_i = X_i \text{ for all but a finite number of } i \right\rbrace .$$
The topology $\tau$ is called the product topology, which is the coarsest topology for which all 
the projections are continuous. Note that the product space defined in 
this way is also a topological vector space, see Theorem 5.2 of \cite{guide2006infinite}. 
Since each $(X_i, \tau_i)$ is locally convex, $\prod_{i \in I}(X_i,\tau_i)$ is locally convex, too, 
see Proposition 2.1.3 of \cite{bogachev2017topological}. 
If $I$ is uncountable, the product space is not normable.  

For any index set $I$, it holds that
\begin{equation}\label{eq:duality_sum_prod}
\left( \bigoplus_{i \in I} (X_i, \tau_i) \right)^* = \prod_{i \in I} X^*_i, \qquad 
\left( \prod_{i \in I} (X_i,\tau_i) \right)^* = \bigoplus_{i \in I} X^*_i,
\end{equation}
see Corollary 1, page 138 and Theorem 4.3, page 137 of \cite{schaefer1971locally}.

We will be using the pairing
$$\left\langle \bff,\mathbf{g} \right\rangle = 
\sum_{i \in I} \left\langle f^i,g^i\right\rangle_{i} , 
\qquad \forall \bff \in \bigoplus_{i \in I} X_i, \mathbf{g} \in \prod_{i \in I} X^*_i,$$
where $\langle\cdot,\cdot\rangle_{i}$ is the natural pairing for $X_{i},X_{i}^{*}$. From Corollary 1, page 138 of \cite{schaefer1971locally}, it holds that
\begin{equation}\label{eq:dual2}
\sigma\left(\prod_{i \in I} X^*_i, \bigoplus_{i \in I} X_i \right) = \prod_{i \in I} \sigma(X^*_i, X_i).
\end{equation}


Let $D$ be a finite index set. In what follows, we will be interested in the duality between 
\begin{equation}\label{eq:EE*}
\mathbf{E} := \bigoplus_{\gamma \in D} 
(L^1(\mathcal{F}_T,P),\|\cdot\|_1), \qquad \mathbf{E}^* =\prod_{\gamma \in D} L^{\infty}(\mathcal{F}_T,P).
\end{equation}
We define $B^{\infty}_{r}= \{ f \in L^{\infty}(\mathcal{F}_T,P): \|f\|_{\infty} \le r\}$, the closed ball of radius $r\geq 0$ in 
$L^{\infty}(\mathcal{F}_T,P)$. The following result is analogous to Proposition 5.2.4 of \cite{delbaen2006mathematics}.  
\begin{proposition}\label{pro:weak_closed_ball}
Let $\bC \subset \mathbf{E}^*$ be a convex set, where $\mathbf{E}^*$ is defined in (\ref{eq:EE*}). 
The set $\bC$ is closed in the $\bw^*$ topology if and only if 
$\bC \cap \prod_{\gamma \in D} B^{\infty}_{r}$ is closed in $\prod_{\gamma \in D} L^{0}(\mathcal{F}_T,P)$
for each $r\geq 0$.
\end{proposition}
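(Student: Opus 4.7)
The plan is to reduce the statement to the classical Krein--Smulian theorem, which is available here precisely because $D$ is finite: the space $\mathbf{E} = \bigoplus_{\gamma \in D} L^1(\mathcal{F}_T,P)$, equipped with the norm $\|\mathbf{g}\| := \sum_{\gamma \in D} \|g^{\gamma}\|_1$, is a genuine Banach space whose continuous dual is $\mathbf{E}^* = \prod_{\gamma \in D} L^{\infty}(\mathcal{F}_T,P)$ with closed unit ball $\prod_{\gamma \in D} B^{\infty}_1$. Krein--Smulian then asserts that a convex subset $\bC \subset \mathbf{E}^*$ is $\bw^*$-closed if and only if $\bC \cap \prod_{\gamma \in D} B^{\infty}_r$ is $\bw^*$-closed for every $r > 0$, so both directions of the proposition reduce to comparing $\bw^*$-closedness with $\prod L^0$-closedness on the bounded set $\prod B^{\infty}_r$.

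For the ``only if'' direction I would argue with sequences, since $\prod L^0$ is metrizable. If $\bC$ is $\bw^*$-closed and $(\bff_n)$ is a sequence in $\bC \cap \prod B^{\infty}_r$ converging in $\prod L^0$ to some $\bff$, I pass to a coordinatewise a.s.\ convergent subsequence (so that $\bff \in \prod B^{\infty}_r$), then apply dominated convergence coordinatewise against any test vector in $\mathbf{E}$ to conclude that $\bff_n \to \bff$ in $\bw^*$ as well, so $\bff \in \bC$.

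For the ``if'' direction, the substance is the observation that on the bounded set $\prod B^{\infty}_r$ four topologies induce the same convex-closure operator: the $\bw^*$-topology $\sigma(\mathbf{E}^*, \mathbf{E})$; the weak topology of $\prod L^2(\mathcal{F}_T,P)$; the $L^2$-norm topology; and the $\prod L^0$-topology. The equivalence of $\bw^*$ and weak $L^2$ on $\prod B^{\infty}_r$ follows from the density of $L^2$ in $L^1$ combined with the uniform bound $r$ (an $\varepsilon/3$-approximation of any $L^1$ test function by an $L^2$ function). The equivalence of $L^2$-norm and $\prod L^0$ on the same set follows from dominated convergence applied to $|f_n - f|^2 \le (2r)^2$. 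Finally, the equivalence of the weak $L^2$-closure and the $L^2$-norm closure for convex sets is Mazur's theorem in the Banach space $\prod L^2$.

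Chaining these identifications, the $\bw^*$-closure of the convex set $\bC \cap \prod B^{\infty}_r$ coincides with its $\prod L^0$-closure, which by hypothesis is $\bC \cap \prod B^{\infty}_r$ itself. Hence $\bC \cap \prod B^{\infty}_r$ is $\bw^*$-closed for every $r > 0$, and Krein--Smulian finishes the argument. The only real subtlety I anticipate is the identification of $\bw^*$ with the weak $L^2$-topology on the bounded set: this requires exploiting uniform boundedness to pass from the dense subspace $L^2 \subset L^1$ of test functions back to the full $L^1$-pairing, and it is the step where the finiteness of $D$ is implicitly used again (so that $\prod L^2$ is itself a single Banach/Hilbert space in which Mazur applies cleanly).
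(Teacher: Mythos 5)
Your proposal is correct and follows essentially the same route as the paper's proof: the easy direction via dominated convergence against $L^1$ test functions on the uniformly bounded set, and the converse via the chain $\prod L^0$-closed $\Rightarrow$ $L^2$-norm closed $\Rightarrow$ weakly $L^2$-closed (Mazur/convexity) $\Rightarrow$ $\sigma(\prod L^\infty,\prod L^1)$-closed, finishing with Krein--Smulian for the Banach space $\bigoplus_{\gamma\in D}L^1$. The only cosmetic difference is that you establish the full equality of the $\sigma(\mathbf{E}^*,\mathbf{E})$ and weak-$L^2$ topologies on the ball via density of $L^2$ in $L^1$, whereas the paper only needs the trivial one-sided implication (a $\sigma(\prod L^\infty,\prod L^2)$-closed set is automatically $\sigma(\prod L^\infty,\prod L^1)$-closed, the latter topology being finer).
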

\begin{proof} We follow the proof of Proposition 4.4 in \cite{kabanov-last}.
$(\Rightarrow)$ 
Let $\bff_n, n \in \mathbb{N}$ be a sequence in  
$\bC \cap \prod_{\gamma \in \Gamma} B^{\infty}_{r}$ such that 
$\bff_n \to \bff$ in $\prod_{\gamma \in D} L^{0}(\mathcal{F}_T,P)$. 
We have to show that $\bff \in \bC \cap \prod_{\gamma \in \Gamma} B^{\infty}_{r}.$ 
For each $g \in L^1(\mathcal{F}_T,P)$ and $\gamma\in D$, the dominated convergence theorem 
implies that 
$\lim_{n \to \infty} E[gf^{\gamma}_n] = E[gf^{\gamma}]$. 
Therefore we have that $\bff \in \bC \cap \prod_{\gamma \in \Gamma} B^{\infty}_{r}.$

$(\Leftarrow)$ Assume that $\bC \cap \prod_{\gamma \in D} B^{\infty}_{r}$ is 
closed in $\prod_{\gamma \in D} L^{0}(\mathcal{F}_T,P)$. 
It is easy to check that $\bC \cap \prod_{\gamma \in D} B^{\infty}_{r}$ is 
closed in the Hilbert space $\prod_{\gamma \in D} (L^{2}, \|\cdot\|_2)$ hence also
in its weak topology $\sigma(\prod_{\gamma \in D} L^{2},\prod_{\gamma \in D} L^{2})$,{}
which is the same as $\sigma(\prod_{\gamma \in D} L^{\infty},\prod_{\gamma \in D} L^{2})$-closedness.
Since $L^{2}\subset L^{1}$, the set $\bC \cap \prod_{\gamma \in D} B^{\infty}_{r}$ is also
$\sigma(\prod_{\gamma \in D} L^{\infty},\prod_{\gamma \in D} L^{1})$-closed so,
by the Krein-{S}mulian theorem, $\bC$ is closed in the weak-star topology.

\end{proof}

\subsection{Convex compactness in $(L^0_+)^{\mathbb{N}}$}\label{sec:convex_comp}


Let $L^{\S}$ denote the set of 
$[0,\infty]$-valued random variables, equipped with the topology of convergence in probability.
A set $A\subset L^0_{+}$ is \emph{bounded} if $\sup_{X\in A}P(X\geq n)\to 0$, $n\to\infty$.
Now consider the topological product $\mathbf{L}:=(L^0_+)^{\mathbb{N}}$.
We call a subset $C\subset\mathbf{L}$ \emph{c-bounded}, if $\pi_k(C)$
is bounded in $L^0_{+}$ for all coordinate mappings $\pi_k$, $k\in\mathbb{N}$.

For any set $A$ we denote by $\mathrm{Fin}(A)$ the family of all non-empty finite subsets of $A$. This
is a directed set with respect to the partial order induced by inclusion. 
We reproduce Definition 2.1 of \cite{gordan}.

\begin{definition}
	{\rm A convex subset $C$ of some topological vector space is \emph{convexly compact},
		if for any non-empty set $A$ and any family $F_a$, $a\in A$ of closed and convex
		subsets of $C$, one has $\cap_{a\in A}F_a\neq\emptyset$ whenever
		\[
		\forall B\in\mathrm{Fin}(A),\ \cap_{a\in B}F_a\neq\emptyset.
		\]}
\end{definition}

It was established, independently in both \cite{pratelli} and \cite{gordan}, that every
closed and bounded convex subset of $L_+^0$ is convexly compact. In this section we will show the following.

\begin{proposition}\label{bfl}
	Any c-bounded, convex and closed subset $C\subset\mathbf{L}$ is convexly compact.
\end{proposition}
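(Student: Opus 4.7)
The strategy is to extend the one-dimensional result of \cite{pratelli, gordan} to the countable product $\mathbf{L}=(L^0_+)^{\mathbb{N}}$ by a diagonal argument. First observe that $\mathbf{L}$ is a metrizable Hausdorff locally convex topological vector space, being a countable product of such. By a suitable adaptation of the arguments in \cite{gordan}, convex compactness of a closed, convex $C\subset\mathbf{L}$ will follow from the sequential Komlos property: for every sequence $\mathbf{f}_n\in C$ there exist forward convex combinations $\mathbf{h}_m\in\mathrm{conv}(\mathbf{f}_m,\mathbf{f}_{m+1},\ldots)$ converging in the product topology to some $\mathbf{g}\in C$. The core of the proof is then to establish this sequential property.

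The plan is a coordinate-wise diagonal extraction. For each $k\in\mathbb{N}$, the c-boundedness of $C$ gives that $\pi_k(C)$ is bounded in $L^0_+$, so the closed convex hull $D_k$ of $\pi_k(C)$ is a closed, bounded, convex subset of $L^0_+$, hence convexly compact by \cite{pratelli, gordan}; in particular $D_k$ has the sequential Komlos property. Starting from any $\mathbf{f}_n\in C$, apply this property in the first coordinate to obtain $\mathbf{f}_n^{(1)}\in\mathrm{conv}(\mathbf{f}_n,\mathbf{f}_{n+1},\ldots)$ with $\pi_1(\mathbf{f}_n^{(1)})\to g^1$ a.s.; inductively, extract $\mathbf{f}_n^{(k+1)}\in\mathrm{conv}(\mathbf{f}_n^{(k)},\mathbf{f}_{n+1}^{(k)},\ldots)$ with $\pi_{k+1}(\mathbf{f}_n^{(k+1)})\to g^{k+1}$ a.s. The crucial observation is that this nested, forward construction preserves the convergence of the earlier coordinates, since forward convex combinations of an a.s.\ convergent sequence converge a.s.\ to the same limit. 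Setting $\mathbf{h}_m:=\mathbf{f}_m^{(m)}$, the nesting yields $\mathbf{h}_m\in\mathrm{conv}(\mathbf{f}_m^{(k)},\mathbf{f}_{m+1}^{(k)},\ldots)$ whenever $m\ge k$, so $\pi_k(\mathbf{h}_m)\to g^k$ a.s.\ for every $k$; equivalently, $\mathbf{h}_m\to\mathbf{g}:=(g^k)_{k\in\mathbb{N}}$ in the product topology of $\mathbf{L}$. Convexity gives $\mathbf{h}_m\in C$, and closedness of $C$ then gives $\mathbf{g}\in C$.

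The main obstacle is the implication ``sequential Komlos $\Rightarrow$ convex compactness'' in a metrizable TVS which may fail to be separable: in $L^0_+$ this step is part of \cite{gordan}, and transferring the argument to $\mathbf{L}$ requires verifying that the reasoning in \cite{gordan} relies only on metrizability and the Hausdorff locally convex structure, not on separability. A secondary technical point, worth emphasising, is that at step $k+1$ the extraction must be performed on $\mathbf{f}_n^{(k)}$ rather than on the original $\mathbf{f}_n$; this nesting is precisely what guarantees that the earlier-coordinate limits $g^1,\ldots,g^k$ survive each further extraction and hence reappear in the diagonal.
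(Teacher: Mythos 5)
Your coordinate-wise diagonal extraction is fine as far as it goes, but it only proves a \emph{sequential} statement: every \emph{sequence} in $C$ admits forward convex combinations converging in $\mathbf{L}$ to a point of $C$. The gap is the bridge you invoke from this to convex compactness. Convex compactness quantifies over an arbitrary family $\{F_a\}_{a\in A}$ of closed convex subsets with the finite intersection property, and the only known route (both in \cite{gordan} and in the present paper) is to pick $c(\mathbf{a})\in\bigcap_{a\in\mathbf{a}}F_a$ for each $\mathbf{a}\in\mathrm{Fin}(A)$ and apply a Komlos-type lemma to the resulting \emph{net} indexed by the directed set $\mathrm{Fin}(A)$, which is uncountable and has no countable cofinal subset when $A$ is uncountable (and $A=\Theta$ is uncountable in the intended application, Proposition \ref{pro:local_closed}). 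The implication ``sequential Komlos property $\Rightarrow$ convex compactness'' is \emph{not} ``part of \cite{gordan}'': Žitković and Pratelli prove the net version of the forward-convex-combination lemma directly and never pass through a sequential reduction, and metrizability of $\mathbf{L}$ does not rescue you, since the relevant nets live on $\mathrm{Fin}(A)$, not on $\mathbb{N}$. Moreover, your diagonal extraction is intrinsically sequential: at stage $k+1$ you refine the output of stage $k$, and there is no way to ``diagonalize'' such an infinite nested refinement along a general directed set.

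This is precisely the difficulty the paper's Lemma \ref{maj} is designed to avoid: instead of treating the coordinates one at a time, it maximizes the single aggregated functional $\sum_{k}2^{-k}Eu(g^k)$ over forward convex combinations of the net, which produces, in one step, convex combinations $g_i\in\Gamma_i$ whose coordinates $g_i^k$ are simultaneously Cauchy in probability for every $k$. If you want to keep your structure, you must either (a) upgrade your argument to nets by replacing the per-coordinate extraction with such a simultaneous maximization, or (b) supply a proof that, for closed convex subsets of $\mathbf{L}$, the sequential forward-convex-combination property implies convex compactness --- a statement you have not established and which does not follow from the metric-space equivalence of compactness and sequential compactness, because the sets $F_a$ are not open covers and the standard total-boundedness argument does not respect convexity.
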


For an element $f\in\mathbf{L}$ we write $f^k:=\pi_k(f)$, $k\in\mathbb{N}$.
Let $I$ be a directed set and $f_i$, $i\in I$ a net in $\mathbf{L}$. For each $i\in I$, let $\Gamma_i$ denote the
set of (finite) convex combinations of the elements $\{f_j:\ j\geq i\}$. The next lemma goes back to Lemma A1.1
of \cite{delbaen1994general}. Its proof closely follows that of Lemma 2.1 in 
\cite{pratelli}, see also Theorem 3.1 in \cite{gordan}.

\begin{lemma}\label{maj} There exist $g_i\in \Gamma_i$, $i\in I$ such that the nets $g^k_i$, $i\in I$ converge to $g^k$ in probability, for
	each $k\in\mathbb{N}$, where $g^k\in L^{\S}$. 
\end{lemma}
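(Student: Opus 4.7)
My plan is a coordinate-by-coordinate Komlós-type extraction combined with a diagonal argument. First I would apply a net version of the scalar Komlós lemma (as in \cite{pratelli}, adapting Lemma A1.1 of \cite{delbaen1994general}) to the net $(f_i^1)_{i\in I}$ in $L^0_+$, producing $g_i^{(1)}\in\Gamma_i$ such that $(g_i^{(1)})^1\to g^1$ a.s.\ along a cofinal subsystem of $I$, for some $g^1\in L^{\S}$.

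I would then iterate on the coordinate. Suppose inductively that $g_i^{(k-1)}\in\Gamma_i$ have been produced with $(g_i^{(k-1)})^m\to g^m$ in probability for all $m\le k-1$; let $\Gamma_i^{(k-1)}$ denote the convex combinations of $\{g_j^{(k-1)}:j\ge i\}$. Associativity of convex combinations yields $\Gamma_i^{(k-1)}\subseteq\Gamma_i$. Applying the scalar extraction to $((g_i^{(k-1)})^k)_{i\in I}$, while using the same convex weights on the full vectors, produces $g_i^{(k)}\in\Gamma_i^{(k-1)}$ with $(g_i^{(k)})^k\to g^k\in L^{\S}$. Because convex combinations preserve pointwise (hence in-probability) convergence outside a fixed null set, the earlier convergences $(g_i^{(k)})^m\to g^m$ persist for every $m\le k-1$.

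Finally, I diagonalize. Letting $\rho$ metrize convergence in probability on $L^{\S}$, I pick indices $i_1\le i_2\le\cdots$ in $I$ with $\sup_{i\ge i_k}\max_{m\le k}\rho((g_i^{(k)})^m,g^m)<1/k$. For each $i\in I$, set $n(i):=\sup\{n:i\ge i_n\}$ (with the convention $n(i):=0$ and $g_i^{(0)}:=f_i$ if no such $n$ exists), and define $g_i:=g_i^{(n(i))}\in\Gamma_i^{(n(i)-1)}\subseteq\Gamma_i$. For any fixed $k$, one has $n(i)\ge k$ eventually, so $\rho(g_i^k,g^k)\to 0$, as required.

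The principal obstacle is the net-theoretic subtlety: the scalar Komlós-type extraction has to be valid for nets rather than sequences (the technical point taken from \cite{pratelli}), and the diagonal construction must respect the directed order on $I$ rather than the natural numbers. A secondary delicate point is that the inductive step needs almost-sure convergence (not merely in probability) so that the convex combinations taken at later stages preserve the previously achieved coordinate convergences.
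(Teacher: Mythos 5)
Your overall strategy (coordinate-wise Koml\'os extraction followed by a diagonalization) is the natural one for sequences, but it has a genuine gap in the net setting, precisely at the point you flag as ``delicate'' without resolving it. The scalar extraction for a net over a general directed set $I$ (in the application $I=\mathrm{Fin}(A)$ for an arbitrary set $A$) only yields convergence \emph{in probability} of the full net; there is no way to upgrade this to almost-sure convergence while keeping $g_i\in\Gamma_i$ for \emph{every} $i\in I$, since passing to an a.s.-convergent cofinal subsystem destroys exactly the tail structure needed both for your inductive step and for the final conclusion (which concerns the whole net, and for uncountable $I$ there is in general no countable cofinal subset to reduce to sequences). Moreover, convergence in probability alone is \emph{not} preserved by convex combinations of tails: on $[0,1]$ with Lebesgue measure, $f_{n,k}=n\,1_{[k/n,(k+1)/n)}$ tends to $0$ in probability under the obvious enumeration, yet the averages $\frac1n\sum_{k=0}^{n-1}f_{n,k}$ are identically $1$ and are convex combinations of arbitrarily late tails. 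So the assertion that ``the earlier convergences persist'' at stage $k$ is unjustified as written: the stage-$k$ weights are chosen with no regard for coordinates $m<k$. (A further, minor, defect: in a general directed set the index $n(i)=\sup\{n:i\ge i_n\}$ can be $+\infty$, e.g.\ when $i$ dominates every $i_n$, and then $g_i^{(n(i))}$ is undefined.)

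The gap is repairable, but the repair essentially collapses into the paper's proof. The paper avoids the induction and the diagonalization altogether by maximizing the single strictly concave functional $g\mapsto\sum_{k}2^{-k}E[1-e^{-g^k}]$ over $\Gamma_i$: near-maximizers are automatically Cauchy in probability in \emph{every} coordinate simultaneously, by the strict-concavity estimate \eqref{beta}, and only in-probability arguments are ever needed. Alternatively, your induction could be saved by noting that if the stage-$(k-1)$ elements are chosen as near-maximizers of the coordinate-$m$ functionals $E[u(\cdot^m)]$ for $m\le k-1$, then by concavity of $u$ and monotonicity of the suprema $s_i$ \emph{any} further convex combination of their tails is again a near-maximizer, hence (again by \eqref{beta}) still converges to $g^m$ in probability --- but this is an argument about the functional, not about a.s.\ convergence, and once it is set up one may as well run the one-shot weighted version that the paper uses.
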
 
\begin{proof} Set $u(x):=1-e^{-x}$, $x\in [0,\infty]$ and note that, for given $\alpha>0$, there is $\beta>0$ such that
	\begin{equation}\label{beta}
	u\left(\frac{x+y}{2}\right)\geq \frac{u(x)+u(y)}{2}+\beta,\mbox{ when }|x-y|\geq\alpha\mbox{ and }\min(x,y)\leq 1/\alpha.
	\end{equation}
	Set, for $i\in I$,
	\[
	s_i=\sup\left\{ \sum_{k=0}^{\infty} 2^{-k}Eu(g^k):\ g\in\Gamma_i\right\}.
	\]
	As $s_i$, $i\in I$ is a non-increasing net of numbers, it converges to $s_{\infty}:=\inf_{i\in I} s_i$.
	Choose a non-decreasing $i_m$, $m\in\mathbb{N}$ such that $s_{\infty}=\lim_{m\to\infty} s_{i_m}$
	and, for each $m$,
	\begin{equation}\label{bech}
	|s_{\infty}-s_{i_m}|\leq \frac{1}{m+1}
	\end{equation}
	and let $g_{i_m}$ be such that $\sum_{k=0}^{\infty} 2^{-k}Eu(g^k_{i_m})\geq s_{i_m}-1/(m+1)$.
	
	For elements $p\in I\setminus\{i_m, m\in\mathbb{N}\}$, there is $l=l(p)$ such that $s_{i_{l+1}}\leq s_p\leq s_{i_l}$ and choose
	$g_p\in\Gamma_p$ such that $\sum_{k=0}^{\infty}2^{-k}Eu(g^k_p)\geq s_{i_{l+1}}-1/(l+1)$.
	
	In order to prove that $g^k_i$, $i\in I$ is a Cauchy net in $L^{\S}$
	for each $k$, we need to establish that, for each $k$ and each $\alpha,\epsilon>0$,
	there is $i(\epsilon)$ such that, for $p,q\geq i(\epsilon)$,
	\[
	P\left(|g_p^k-g^k_q|\geq\alpha,\ \min(g_p^k,g_q^k)\leq 1/\alpha\right)\leq \epsilon.
	\]
	Let $p,q\geq i_m$. Notice that $(g_p+g_q)/2\in\Gamma_{i_m}$ so \[
	\sum_{k=0}^{\infty}2^{-k}Eu\left(\frac{g_p^k+g_q^k}{2}\right)\leq s_{i_m}.
	\]
	However, by construction, $l(p)\geq m$ so
	\[
	\sum_{k=0}^{\infty}2^{-k}Eu\left(g_p^k\right)\geq s_{i_{l(p)+1}}-\frac{1}{l(p)+1}\geq s_{i_m}-\frac{1}{m+1}-(s_{i_m}-s_{i_{l(p)+1}}),
	\]
	and the latter is $\geq s_{i_m}-\frac{2}{m+1}$, by \eqref{bech}. A similar estimate holds
	for $\sum_{k=0}^{\infty}2^{-k}Eu\left(g_q^k\right)$ hence we conclude, from \eqref{beta}, that
	\begin{eqnarray*}
		\beta\sum_{k=0}^{\infty} 2^{-k}P\left(|g_p^k-g^k_q|\geq\alpha,\ \min(g_p^k,g_q^k)\leq 1/\alpha\right) &\leq&\\
		s_{i_m}-\frac{s_{i_m}}{2}+\frac{1}{m+1}-\frac{s_{i_m}}{2}+\frac{1}{m+1} &\leq& \frac{2}{m+1}.
	\end{eqnarray*}
	This entails, for each $k\in\mathbb{N}$, 
	\[
	P\left(|g_p^k-g^k_q|\geq\alpha,\ \min(g_p^k,g_q^k)\leq 1/\alpha\right)\leq \frac{2^{k+1}}{\beta(m+1)},
	\]
	which can be made arbitrarily small if $m$ is large enough. The statement is proved.
\end{proof}

\begin{proof}[Proof of Proposition \ref{bfl}.]
	Let $A$ be an arbitrary index set and let $C_a$, $a\in A$ be closed, convex subsets of $C$. Assume that for each $\mathbf{a}\in\mathrm{Fin}(A)$ with 
	$\mathbf{a}=\{a_1,\ldots,a_K\}$ we have $C_{a_1}\cap\ldots\cap C_{a_K}\neq \emptyset$. Let us pick an element 
	$c(\mathbf{a})$ from this
	intersection. Apply Lemma \ref{maj} to the net $c(\mathbf{a})$, $\mathbf{a}\in \mathrm{Fin}(A)$ 
	to obtain convex combinations $g(\mathbf{a})\in C_{a_1}\cap\ldots\cap C_{a_K}$
	such that the net $g^k(\mathbf{a})$, $\mathbf{a}\in\mathrm{Fin}(A)$ converges to some $g^k\in L^{\S}$ in probability, for each $k\in\mathbb{N}$.
	
	As $C$ is c-bounded, $g^k(\mathbf{a})$, $\mathbf{a}\in A$ are bounded in $L^0_{+}$, so, actually, $g^k\in L^0_{+}$. It follows that $g\in\mathbf{L}$ and,
	by the definition of topological products, the net $g(\mathbf{a})$, $\mathbf{a}\in\mathrm{Fin}(A)$ converges to $g$.
	However, for each fixed $a\in A$, $g(\mathbf{a})\in C_a$ for each $\mathbf{a}\in\mathrm{Fin}(A)$ containing $a$, hence, by the closedness of $C_a$,
	we also have $g\in C_a$. It follows that $\cap_{a\in A}C_a\neq\emptyset$ since $g$ is in this intersection.
\end{proof}

\subsection{Convex compactness for finite variation processes}\label{section:v_space}

Let $\mathcal{V}$ denote the family of non-decreasing, left-continuous functions on $[0,T]$ which are $0$ at $0$.
Let $r_k$, $k\in\mathbb{N}$ be an enumeration of $\left(\mathbb{Q}\cap [0,T]\right)\cup \{T\}$ with $r_0=T$.
For $f,g\in\mathcal{V}$, define
\[
\rho(f,g):=\sum_{k=0}^{\infty} 2^{-k}|f(r_k)-g(r_k)|.
\]
The series converges since $|f(r_k)-g(r_k)|\leq f(T)+g(T)$ for each $k$, and it defines a metric. The corresponding Borel-field
is denoted by $\mathcal{G}$.

Let $\mathbf{V}$ denote the set of pairs 
$H=(H^{\uparrow},H^{\downarrow})$ where
$H^{\uparrow}_t,H^{\downarrow}_t$, $t\in [0,T]$ are predictable processes 
such that $H^{\uparrow}(\omega),H^{\downarrow}(\omega)\in\mathcal{V}$ for each 
$\omega\in\Omega$.
Considered as mappings $H^{\uparrow},H^{\downarrow}:(\Omega, \mathcal{F}) \to (\mathcal{V}, \mathcal{G})$, they 
are measurable, 
by the definition of the metric $\rho$.
{}
We identify elements of $\mathbf{V}$ when they coincide outside a $P$-null set.
We equip $\mathbf{V}$ with
the topology coming from the metric 
\[
\varrho(H,G):=E[\rho(H^{\uparrow},G^{\uparrow})\wedge 1]+E[\rho(H^{\downarrow},G^{\downarrow})\wedge 1].
\]
Although this metric wasn't defined, a
related convergence structure was introduced already in \cite{chau2019robust}.

Similarly to Proposition \ref{bfl}, we obtain the following 
convex compactness result for subsets of $\mathbf{V}$.

\begin{proposition}\label{pro:convex_compact_V}
Let $C$ be a convex and closed subset of $\mathbf{V}$. 
If $$\{H^{\uparrow}_T + H^{\downarrow}_T, H \in C \}$$ is bounded in $L^{0}_{+}$ then $C$ is convexly compact.
\end{proposition}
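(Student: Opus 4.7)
The plan is to transplant the argument of Proposition \ref{bfl} from $\mathbf{L}=(L^0_+)^{\mathbb{N}}$ to $\mathbf{V}$ via the evaluation map
\[
\Phi:\mathbf{V}\to (L^0_+)^{\mathbb{N}}\times (L^0_+)^{\mathbb{N}},\qquad H\mapsto \bigl((H^{\uparrow}(r_k))_{k\in\mathbb{N}},(H^{\downarrow}(r_k))_{k\in\mathbb{N}}\bigr),
\]
which is injective on $\mathbf{V}$ by left-continuity. Since every coordinate of $\Phi(H)$ is dominated by $H^{\uparrow}_T+H^{\downarrow}_T$, the assumed boundedness of $\{H^{\uparrow}_T+H^{\downarrow}_T:H\in C\}$ in $L^0_+$ translates into c-boundedness of $\Phi(C)$.

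Let $(C_a)_{a\in A}$ be a family of closed convex subsets of $C$ with the finite intersection property, and choose $c(\mathbf{a})\in\bigcap_{a\in \mathbf{a}}C_a$ for each $\mathbf{a}\in\mathrm{Fin}(A)$. First I would apply Lemma \ref{maj} to the net $\Phi(c(\mathbf{a}))$, obtaining convex combinations $g(\mathbf{a})=\sum_j \alpha_j c(\mathbf{b}_j)$ with each $\mathbf{b}_j\supseteq \mathbf{a}$ whose images converge in probability coordinatewise to some $(X^{\uparrow}_k,X^{\downarrow}_k)_{k\in\mathbb{N}}$. Convexity of each $C_a$ guarantees $g(\mathbf{a})\in\bigcap_{a\in\mathbf{a}}C_a$.

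The delicate step, which I expect to be the main obstacle, is lifting the pointwise-in-$k$ limits back to an element $H\in\mathbf{V}$. Since $g^{\uparrow}(\mathbf{a})(r_k)\le g^{\uparrow}(\mathbf{a})(r_l)$ a.s.\ whenever $r_k\le r_l$, the limits satisfy $X^{\uparrow}_k\le X^{\uparrow}_l$ a.s.; discarding the countable union of null sets I may assume monotonicity on the rationals holds simultaneously. Setting $H^{\uparrow}(t):=\sup\{X^{\uparrow}_k:r_k<t\}$ for $t>0$ and $H^{\uparrow}(0):=0$ produces a left-continuous non-decreasing trajectory; since $X^{\uparrow}_k$ is $\mathcal{F}_{r_k}$-measurable, $H^{\uparrow}(t)$ is $\mathcal{F}_{t-}$-measurable, hence $H^{\uparrow}$ is predictable. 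An identical construction yields $H^{\downarrow}$, and thus $H\in\mathbf{V}$.

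To finish I would verify that $\varrho(g(\mathbf{a}),H)\to 0$. For any $\varepsilon>0$, c-boundedness (applied both to $g^{\uparrow}(\mathbf{a})(T)$ and to $H^{\uparrow}(T)$, which is also bounded as a limit in probability of bounded random variables) gives $K$ such that the event $\{g^{\uparrow}(\mathbf{a})(T)+H^{\uparrow}(T)>K\}$ has probability less than $\varepsilon$ uniformly in $\mathbf{a}$. Pick $N$ with $2^{-N+2}K<\varepsilon$; then the tail $\sum_{k\ge N}2^{-k}|g^{\uparrow}(\mathbf{a})(r_k)-H^{\uparrow}(r_k)|$ is, off this exceptional event, bounded by $2^{-N+1}\cdot 2K<\varepsilon$, while the finite head $\sum_{k<N}2^{-k}|g^{\uparrow}(\mathbf{a})(r_k)-H^{\uparrow}(r_k)|$ tends to $0$ in probability. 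Hence $\rho(g^{\uparrow}(\mathbf{a}),H^{\uparrow})\to 0$ in probability, so bounded convergence yields $E[\rho(g^{\uparrow}(\mathbf{a}),H^{\uparrow})\wedge 1]\to 0$, and similarly for the $\downarrow$ component. Since $g(\mathbf{a})\in C_a$ for every $\mathbf{a}\ni a$ and $C_a$ is closed, we obtain $H\in\bigcap_{a\in A}C_a$, completing the proof.
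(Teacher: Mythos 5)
Your strategy is the one the paper itself uses: evaluate at the rational time points $r_k$ to transfer the problem to the countable product $(L^0_+)^{\mathbb{N}}$, run the convex-compactness machinery there (you inline Lemma \ref{maj}, where the paper instead applies Proposition \ref{bfl} to the image sets $D_a$), and lift the coordinatewise limit back to $\mathbf{V}$ by left limits. Up to and including the extraction of the convex combinations $g(\mathbf{a})$, their membership in $\bigcap_{a\in\mathbf{a}}C_a$, and the a.s.\ monotone limits $X^{\uparrow}_k$, everything you write is correct, and your tail estimate for the series defining $\rho$ is also fine.

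The gap is in the lifting step and the verification that $\varrho(g(\mathbf{a}),H)\to 0$. With your definition $H^{\uparrow}(t):=\sup\{X^{\uparrow}_k:r_k<t\}$ one has $H^{\uparrow}(r_l)=\lim_{q\uparrow r_l}X^{\uparrow}_q$, and this can be \emph{strictly smaller} than $X^{\uparrow}_l$: a coordinatewise limit of left-continuous non-decreasing functions need not be left-continuous along the rationals. A deterministic example is $g_n^{\uparrow}=1_{(1/2-1/n,\,T]}\in\mathcal{V}$ (uniformly bounded by $1$), whose limit along rationals is $X_q=1_{\{q\geq 1/2\}}$; this is not the restriction to $\mathbb{Q}$ of any element of $\mathcal{V}$. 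On the event $\{X^{\uparrow}_l>H^{\uparrow}(r_l)\}$ the term $2^{-l}|g^{\uparrow}(\mathbf{a})(r_l)-H^{\uparrow}(r_l)|$ does not vanish, so your claim that the ``finite head'' of the series tends to $0$ in probability is unjustified; without $\varrho$-convergence you cannot invoke the $\varrho$-closedness of $C_a$ to get $H\in C_a$. Redefining $H^{\uparrow}(r_l):=X^{\uparrow}_l$ destroys left-continuity, so the problem is not cured by changing the regularization: one must either argue that the limit has no jumps at the chosen rationals (which is not automatic --- taking $C_n$ to be the $\varrho$-closed convex hull of $\{g_j^{\uparrow}:j\geq n\}$ in the example above even produces a nested family of nonempty closed convex sets with empty intersection) or work with a convergence/closedness notion that tolerates modification at the random, countable set of jump times. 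I should add that the paper's own proof is silent at exactly this point --- it asserts that the evaluation images $D_a$ are closed in $\mathbf{L}$ and that the left-limit extension lies in $\bigcap_a C_a$ --- so you have reproduced the intended argument faithfully, but the step as you (and the paper) state it does not go through without an additional idea.
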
 
\begin{proof}
Let $A$ be an arbitrary index set and let $C_a$, $a\in A$ be closed, convex subsets of $C$. Assume that for any $\mathbf{a}\in\mathrm{Fin}(A)$ with 
$\mathbf{a}=\{a_1,\ldots,a_K\}$ we have $C_{a_1}\cap\ldots C_{a_K}\neq \emptyset$. We prove that $\cap_{a\in A}C_a\neq\emptyset$.
We identify $\mathbf{L}$ with $(L_{+}^{0})^{(\mathbb{Q}\cap [0,T])\cup \{T\}}$.

Denote 
$$D = \left\lbrace (H^{\uparrow}_q,H^{\downarrow}_q)_{q \in \left(\mathbb{Q}\cap [0,T]\right)\cup \{T\} }, 
H \in C \right\rbrace\subset \mathbf{L} $$
and similarly
$$D_{a} = \left\lbrace (H^{\uparrow}_q,H^{\downarrow}_q)_{q \in \left(\mathbb{Q}\cap [0,T]\right)\cup \{T\} }, H \in C_{a} \right\rbrace  \subset D.$$
Clearly, $D_{a}, D$ are convex and closed in $\mathbf{L}$. By hypothesis, the set $D$ is $c$-bounded. 
Therefore, Proposition \ref{bfl} implies there exists 
$(\bar{H}_q^{\uparrow},\bar{H}_q^{\downarrow})_{q \in \left(\mathbb{Q}\cap [0,T]\right)\cup \{T\} }$ in $\cap_{a\in A}D_a\neq\emptyset$. Define, for $t\in [0,T]\setminus{}
\mathbb{Q}$, 
$$H^{\uparrow}_t = \lim_{q \uparrow t, q \in \mathbb{Q}} H^{\uparrow}_q, \qquad H^{\downarrow}_t = \lim_{q \uparrow t, q \in \mathbb{Q}} H^{\downarrow}_q.$$
Then $(\bar{H}^{\uparrow},\bar{H}^{\downarrow}) \in \cap_{a\in A}C_a$ as required. 
\end{proof}

\begin{corollary}\label{cor:ccv}
Let $C$ be a convex and closed subset of $\mathbf{V}$. 
If $\{H^{\uparrow}_T + H^{\downarrow}_T, H \in C \}$ is bounded in probability then,
for each $a>0$, $[-a,a]\times C$ is convexly compact (as a subset of $\mathbb{R}\times\mathbf{V}$).\hfill $\Box$
\end{corollary}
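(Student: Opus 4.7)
The plan is to extend the embedding technique of Proposition \ref{pro:convex_compact_V} so that the bounded scalar coordinate $z\in[-a,a]$ is absorbed into the $(L_+^0)^{\mathbb{N}}$-framework of Proposition \ref{bfl}. Fix the enumeration $(r_k)_{k\in\mathbb{N}}$ of $(\mathbb{Q}\cap [0,T])\cup \{T\}$ used in Subsection \ref{section:v_space} and define an injection
$$\psi:[-a,a]\times \mathbf{V}\to \mathbf{L}=(L^0_+)^{\mathbb{N}},\qquad \psi(z,H):=\bigl(z+a,\;H^\uparrow_{r_0},H^\downarrow_{r_0},H^\uparrow_{r_1},H^\downarrow_{r_1},\ldots\bigr),$$
where $z+a\in[0,2a]$ is viewed as a constant random variable in $L^0_+$. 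Left-continuity of the components of $H$ makes $\psi$ injective, its inverse recovering the values of $H$ at non-rational times by left-limits.

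Next I would verify that $D:=\psi([-a,a]\times C)$ meets the hypotheses of Proposition \ref{bfl}. Convexity is obvious. For $c$-boundedness, the first coordinate lies in $[0,2a]$ uniformly, while the remaining coordinates satisfy $H^\uparrow_{r_k}, H^\downarrow_{r_k}\le H^\uparrow_T+H^\downarrow_T$, which is bounded in probability by hypothesis. For closedness of $D$, a net $\psi(z_\beta,H_\beta)\to\xi$ in $\mathbf{L}$ forces $z_\beta+a$, being constants in $[0,2a]$, to converge in probability and hence as real numbers to some $z+a\in[0,2a]$; the remaining coordinates converge in probability at each $r_k$, and as in the proof of Proposition \ref{pro:convex_compact_V} this determines, by left-continuous extension, an element $H\in\mathbf{V}$. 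Closedness of $C$ then gives $H\in C$.

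Given now a family $\{F_\alpha\}_{\alpha\in A}$ of closed convex subsets of $[-a,a]\times C$ with the finite intersection property, the same closedness argument applied to $F_\alpha$ in place of $C$ shows that each $\psi(F_\alpha)\subset D$ is closed and convex in $\mathbf{L}$, and the family $\{\psi(F_\alpha)\}_{\alpha\in A}$ inherits the finite intersection property. Proposition \ref{bfl} then produces $\xi^*\in\bigcap_{\alpha\in A}\psi(F_\alpha)$, and $\psi^{-1}(\xi^*)\in\bigcap_{\alpha\in A}F_\alpha$ is the desired common point.

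The one technical point deserving care is that $\psi$ is a homeomorphism onto its image, i.e.\ that coordinate-wise convergence in probability of $H^\uparrow_\beta(r_k)$ and $H^\downarrow_\beta(r_k)$ at every $r_k$ forces $\varrho$-convergence of $H_\beta$ to $H$. This follows from a routine truncation of the series defining $\rho$, using the uniform $L^0$-boundedness of $H^\uparrow_T+H^\downarrow_T$ to control the tail, followed by dominated convergence applied to $E[\rho\wedge 1]$; this is precisely the step where the boundedness-in-probability hypothesis plays its essential role.
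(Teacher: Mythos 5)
Your proof is correct and is precisely the ``obvious modification'' the paper has in mind: you reuse the embedding of Proposition \ref{pro:convex_compact_V} into $\mathbf{L}=(L^0_+)^{\mathbb{N}}$, letting the shifted scalar $z+a\in[0,2a]$ occupy one extra (trivially $c$-bounded) coordinate, and then invoke Proposition \ref{bfl}; your closing remark on why coordinatewise convergence at the $r_k$ recovers $\varrho$-convergence is exactly the point where the boundedness hypothesis enters, just as in the paper's argument. Since the paper's own proof is a one-line reference to these same arguments, there is nothing further to compare.
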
 
\begin{proof}
Follows by obvious modifications of the arguments of Subsection \ref{sec:convex_comp} and Proposition \ref{pro:convex_compact_V}.	
\end{proof}


\bibliographystyle{plain}

\end{document}